\date{}
\title{\bf{Cubic lower record-based transmuted family of  distributions: Theory, Estimation, Applications}}
\author{Caner Tan{\i}\c{s}}\affil{Department of Statistics, Necmettin Erbakan University, Konya, Turkey;  caner.tanis@erbakan.edu.tr}
\newtheorem{theorem}{Theorem}
\begin{document}
\maketitle
\begin{abstract}

In this study, a family of distributions called cubic lower record-based transmuted is provided. A special case of this family is proposed as an alternative exponential distribution. Several statistical properties are explored. We utilize nine different methods to estimate the parameters of the suggested distribution. In order to compare the performances of these methods, we consider a comprehensive Monte-Carlo simulation study. As a result of simulation study, we conclude that minimum absolute distance estimator is a valuable alternative to maximum likelihood estimator. Then, we carried out two real-world data examples to evaluate the fits of introduced distribution as well as its potential competitor ones. The findings of real-world data analysis show that the best-fitting distribution for both datasets is our model.
\end{abstract}
\textbf{Keywords:}
Lower record values, Transmuted distributions, Point estimation, Monte Carlo simulation, Real data analysis.

\textbf{AMS:}
62E10, 62F10, 62P99.

\section{Introduction}
In last decades, the systematic proposition of lifetime distributions has become a key focus in the advancement of statistical modeling, especially by becoming flexibility and explicative power. Although classical distributions form the building blocks of probability theory, they often fall short when modeling real data sets exhibiting properties such as asymmetry, heavy tails, or multiple peaks. To overcome these limitations, interest has grown in transformation approaches and generalized distribution families. In this regard, the transmutation map introduced by Shaw and Buckley \cite{shaw2009alchemy} using a method based on order statistics is considered one of the most effective tools in the literature. The mathematical expression of the transmutation map defined in the relevant study \cite{shaw2009alchemy} is given as follows:

Let $X_1$ and $X_2$ be independent and identically distributed (iid) random variables with cumulative distribution function (CDF) $G\left(.\right)$ and probability density function (PDF) $g\left(.\right)$, and $X_{1:n} $, $X_{2:n} $ be the first two order statistics associated with the sample.

 Let us define random variable $T$ 
 \begin{equation*}
 \begin{array}{l}
 T\overset{d}{=}X_{1:2},\mbox{ with probability }\omega,  \\ 
 T\overset{d}{=}X_{2:2},\mbox{ with probability }1-\omega ,%
 \end{array}%
 \end{equation*}
 and corresponding CDF is
\begin{eqnarray}
\label{transmuted: cdf}
F_{T}\left( x\right)  &=&\omega P\left( {X_{1:2}\leq x}\right) +\left( {1-\omega }%
\right) P\left( {X_{2:2}\leq x}\right)   \notag \\
&=&\omega \left( {1-\left( {1- G\left( x\right)}\right)^{2} }%
\right) +\left( {1-\omega }\right) G^{2}\left( x\right)   \notag \\
&=&2\omega G\left( x\right) +\left( {1-2\omega }\right) G^{2}\left( x\right) .
\end{eqnarray}
 where $\omega \in (0,1) $

Substituting $\omega =\frac{1+\lambda }{2}$ in Eq. (\ref{transmuted: cdf}), the CDF can be rewritten as follows: 

\begin{equation}
\label{eq2}
F_{T}\left(x\right)=\left(1+\lambda\right)G\left(x\right)-\lambda \left(G\left(x\right)\right)^2,
\end{equation}
and the corresponding PDF is
\begin{equation}
\label{eq3}
f_{T}\left(x\right)=\left(1+\lambda\right)g\left(x\right)-2\lambda G\left(x\right)g\left(x\right),
\end{equation}
where $\lambda \in \left[-1,1\right]$.
In last decades, the transmutation map based on the distributions of the order statistics has been used by many authors to generate the flexible statistical distributions. Some of these studies are listed as follows: 
Aryal and Tsokos \cite{aryal2011transmuted} proposed transmuted Weibull distribution as an alternative to Weibull distribution. Mahmoud and Mandouh \cite{mahmoud2013transmuted} studied on transmuted Fr{\'e}chet distribution Transmuted inverse Rayleigh distribution is suggested by Afaq et al. \cite{ahmad2014transmuted}. Tan{\i}{\c{s}} et al. \cite{tanics2020transmuted} provided transmuted complementary exponential power distribution.


Then, Granzotto et al. \cite{granzotto2017cubic} proposed the cubic rank transmuted family of distribution using the cubic rank  transmutation map (CRTM) based on the distributions of first three order statistics. The CRTM can be summarized as follows:

Let $X_1 $,$X_2 $,$X_3 $ be iid random variables with the CDF $G\left(x\right)$ and the PDF $g\left(x\right)$. $X_{1:n} $,$X_{2:n} $ and $X_{3:n} $ be the order statistics of this sample.
 Let us define a random variable $Z$ 
\[
\begin{array}{l}
 Z\overset{d}{=} \mbox{ X}_{1:3} ,\mbox{ with probability 
}p_1, \\ 
Z\overset{d}{=} \mbox{ X}_{2:3} ,\mbox{ with probability 
}p_2, \\ 
 Z\overset{d}{=} \mbox{ X}_{3:3} ,\mbox{ with probability 
}p_3, \\ 
 \end{array}
\]

where $p_1 +p_2 +p_3 =1$. The corresponding CDF of $Z\mbox{ }$ is
\begin{equation}
 \label{cubic}
\begin{array}{l}
 F_Z \left( x \right)=p_1 P\left( {X_{1:3} \le x} \right)+p_2 P\left( 
{X_{2:3} \le x} \right)+p_3 P\left( {X_{3:3} \le x} \right) \\ 
 \\ 
 {\mathrm{\;\;\;\;\;\;\;\;}}\mbox{ }=p_1 \left[ {1-\left( {1-G\left( x \right)} \right)^3} \right]+6p_2 
\int\limits_0^x {G\left( t \right)\left( {1-G\left( t \right)} 
\right)g\left( t \right)dt} +p_3 G^3\left( x \right) \\ 
 \\ 
{\mathrm{\;\;\;\;\;\;\;\;}} \mbox{ }=3p_1 G\left( x \right)+\left( {3p_2 -3p_1 } \right)G^2\left( x 
\right)+\left( {1-3p_2 } \right)G^3\left( x \right). \\ 
 \end{array}
\end{equation}
Substituting $3p_{1}=\lambda _{1},3p_{2}=\lambda _{2}$ in Eq. (\ref{cubic}),
the CDF and PDF of $Z$ are
\begin{equation}
F_{Z}\left( x\right) =\lambda _{1}G\left( x\right) +\left( \lambda
_{2}-\lambda _{1}\right) G^{2}\left( x\right) +\left( 1-\lambda _{2}\right)
G^{3}\left( x\right) ,  \label{eq5}
\end{equation}%
and 
\begin{equation}
f_{Z}\left( x\right) =g\left( x\right) \left[ \lambda _{1}+2\left( \lambda
_{2}-\lambda _{1}\right) G\left( x\right) +3\left( 1-\lambda _{2}\right)
G^{2}\left( x\right) \right] ,  \label{eq6}
\end{equation}%
respectively, where $\lambda _{1}\in \left[ 0,1\right] ,\lambda _{2}\in %
\left[ -1,1\right] $.
\citet{granzotto2017cubic} proposed two sub models of the cubic rank transmuted family of distributions (CRTFD) based on Weibull and log-logistic distributions. 
Then, Sara{\c{c}}o{\u{g}}lu and Tan{\i}{\c{s}} \cite{saraccouglu2018new} introduced a new member of the CRTFD based on Kumaraswamy distribution.
Also, the cubic rank transmuted Kumaraswamy distribution stands out as the first member of the CRTFD whose random variable is defined on the unit interval (0,1). Tan{\i}{\c{s}} and Sara{\c{c}}o{\u{g}}lu \cite{tanics2022cubic} studied on cubic rank transmuted inverse Rayleigh distribution. Tan{\i}{\c{s}} and Sara{\c{c}}o{\u{g}}lu \cite{tanics2023cubic} introduced a new member of the CRTFD based on generalized Gompertz distribution.

Likewise, the CRTFD as presented in Eq. (\ref{transmuted: cdf}), Balakrishnan and He \cite{balakrishnan2021record} suggested a new family of distributions based on the first two upper record statistics, namely record-based transmuted family of distributions (RBTFD). The RBTFD is obtained by record-based transmutation map (RBTM) defined as follows:

Let $X_1 $ and $X_2 $  be a random sample with two sizes from the distribution with CDF $G(.)$ and PDF $g(.)$, and $X_{U\left( 1 \right)} $ and $X_{U\left( 2 \right)} $ 
be upper records of the corresponding sample.
 
 Let us define a random variable $R$ 
\[
\begin{array}{l}
 R\overset{d}{=} \mbox{ }X_{U\left( 1 \right)} ,\mbox{ 
with probability }p _1, \\ 
 R\overset{d}{=} \mbox{ }X_{U\left( 2 \right)} ,\mbox{ 
with probability }p _2, \\ 
 \end{array}
\]
where $U_{\left( n\right) }=\min \left\{ i:i>U\left( n-1\right)
,X_{i}>X_{U\left( n-1\right) }\right\} \left\{ U_{\left( n\right) }\right\}
_{n=1}^{\infty }$ denotes upper record times and $\left\{ X_{U\left(
n\right) }\right\} _{n=1}^{\infty }$ refers to the corresponding record sequence \cite{balakrishnan2021record},\citet{arnold2008first}, $p_{1}+p_{2}=1$ and $R$ denotes a random variable from  record-based transmuted distribution. Thus, the CDF of $R$ is 
\begin{eqnarray}
\label{rbt: cdf}
F_{R}\left( x\right)  &=&p_{1}P\left( {X_{U\left( 1\right) }\leq x}\right)
+p_{2}P\left( {X_{U\left( 2\right) }\leq x}\right)   \nonumber \\
&=&G\left( x\right) +p\left[ {\left( {1-G\left( x\right) }\right) \ln
\left( {1-G\left( x\right) }\right) }\right], 
\end{eqnarray}
and the corresponding PDF is  
 
\begin{equation}
\label{rbt: pdf}
f_R \left( x \right)=g\left( x \right)\left[ {1+p\left( {-\ln \left( 
{1-G\left( x \right)} \right)-1} \right)} \right],
\end{equation}
where $p\in \left( {0,1} \right)$. Six sub-models of the RBTFD were introduced based on uniform, exponential, linear exponential, Weibull, Normal, logistic distributions in \cite{balakrishnan2021record}. However, no statistical inferences were provided regarding the proposed distributions in \cite{balakrishnan2021record}. Tan{\i}{\c{s}} and Sara{\c{c}}o{\u{g}}lu \cite{tanics2022record} provided the statistical inferences and theoretical properties of the record-based transmuted Weibull and exponential distributions.
In the literature, there are a few special cases of RBTFD based on Lindley \cite{tanics2024new}, generalized linear exponential \cite{arshad2024record}, power Lomax \cite{sakthivel2022record}, and Burr X \cite{alrweili2025statistical}.



Also, the cubic record-based transmuted family of distributions (CRBTFD), based on the distributions of the first three upper record values, was introduced by \citet{balakrishnan2021record}. The CRBTFD is derived using the cubic record-based transmutation map (CRBTM). The CRBTM is given as follows: 
 
Let $X_1 $, $X_2$, and $X_3$  be iid random variables with the CDF $G(.)$ and PDF $g(.)$, and $X_{U\left( 1 \right)}$, $X_{U\left( 2 \right)} $, and $X_{U\left( 3 \right)} $ be upper records associated the sample.
 
 Let us define a random variable $R^{*}$ 
\[
\begin{array}{l}
 R^{*}\overset{d}{=} \mbox{ }X_{U\left( 1 \right)} ,\mbox{ 
with probability }p _1, \\ 
 R^{*}\overset{d}{=} \mbox{ }X_{U\left( 2 \right)} ,\mbox{ 
with probability }p _2, \\ 
R^{*}\overset{d}{=} \mbox{ }X_{U\left( 3 \right)} ,\mbox{ 
with probability }1-p _1-p _2, \\
 \end{array}
\]
where $R^{*}$ denotes a random variable from a distribution in CRBTFD and the CDF of $R^{*}$ is 
\begin{eqnarray}
\label{crbt: cdf}
F_{R^{*}}\left( x\right)  &=&p_{1}P\left( {X_{U\left( 1\right) }\leq x}\right)
+p_{2}P\left( {X_{U\left( 2\right) }\leq x}\right) +\left(
1-p_{1}-p_{2}\right) P\left( {X_{U\left( 3\right) }\leq x}\right)   \notag \\
&=&1-\left( 1-G\left( x\right) \right) \left( 1+\left( 1-p_{1}\right) \left(
-{\ln \left( {1-G\left( x\right) }\right) }\right) \right)   \notag \\
&&+\left( 1-G\left( x\right) \right) \left( \frac{1-p_{1}-p_{2}}{2}\left( -{%
\ln \left( {1-G\left( x\right) }\right) }\right) ^{2}\right), 
\end{eqnarray}%
where $\ 0<p_{1},p_{2}<1,p_{1}+p_{2}\leq 1$. The corresponding PDF of $R^{*}$ is  
 
\begin{equation}
\label{crbt: pdf}
f_{R^{*}}\left( x\right) =g\left( x\right) \left( p_{1}+p_{2}\left( -{\ln
\left( {1-G\left( x\right) }\right) }\right) {+}\frac{1-p_{1}-p_{2}}{2}%
\left( -{\ln \left( {1-G\left( x\right) }\right) }\right) ^{2}\right) .
\end{equation}
Tan{\i}{\c{s}} \cite{tanics2025cubic} provide statistical inferences about CRBTFD and a special case based on the Weibull distribution.
In addition to RBTM, Balakrishnan and He \cite{balakrishnan2021record} suggested a dual record-based transmutation map (DRBTM) to generate flexible distributions from the dual record-based transmuted family of distributions (DRBTFD). The DRBTM is based on distributions of the first two lower record values and is defined as follows:   


Let $X_1 $ and $X_2 $  be iid random variables with CDF $G(.)$ and PDF $g(.)$, and $X_{L\left( 1 \right)} $ and $X_{L\left( 2 \right)} $ 
be lower records of corresponding sample.
 
 Let us define a random variable $L$ 
\[
\begin{array}{l}
 L\overset{d}{=} \mbox{ }X_{L\left( 1 \right)} ,\mbox{ 
with probability }1-p \\ 
 L\overset{d}{=} \mbox{ }X_{L\left( 2 \right)} ,\mbox{ 
with probability }p, \\ 
 \end{array}
\]
where $L_{\left( n\right) }=\min \left\{ i:i>L\left( n-1\right)
,X_{i}<X_{L\left( n-1\right) }\right\} \left\{ L_{\left( n\right) }\right\}
_{n=1}^{\infty }$ refers to lower record times and $\left\{ X_{L\left(
n\right) }\right\} _{n=1}^{\infty }$ denotes the corresponding record sequence \cite{balakrishnan2021record},\citet{arnold2008first}, $L$ refers to a random variable distributed any member of DRBTFD. Thus, the CDF of $L$ is 
\begin{eqnarray}
	F_{L}\left( x\right) &=&\left(
		1-p\right)P\left( X_{L\left( 1\right) }\leq x\right) +p P\left( X_{L\left( 2\right) }\leq x\right)  \notag \\
	&=&\left(
		1-p\right)G\left( x\right) +p \left[ G\left( x\right) \left(
	1-\ln \left( G\left( x\right) \right) \right) \right]  \notag \\
	&=&G\left( x\right) \left[ 1-p\ln \left( G\left( x\right) \right) \right],
	\label{trltcdf}
	\end{eqnarray}%
 and the corresponding PDF is  
 
\begin{equation}
\label{eq9}
f_{L} \left( x \right)=g\left( x \right)\left[ {1+p\left( {-\ln \left( 
{1-G\left( x \right)} \right)-1} \right)} \right].
\end{equation}
where $p\in \left( {0,1} \right)$. In recent years, some studies about the DRBTFD were considered based on Fr{\'e}chet \cite{tanics2021frechet}, power function \cite{tanis2021transmuted} and inverse Rayleigh \cite{tanis2022record} distributions.

Balakrishnan and He \cite{balakrishnan2021record} have briefly noted the order-3 version of the DRBTFD in their paper, though no detailed exposition was provided. Inspired by this work, the main objective of the present study is to propose the cubic lower record-based transmutation map (CLRBTM) and to provide a foundation for future studies aimed at generating new distributions within the cubic lower record-based transmuted family of distributions (CLRBTFD), using the distributions of first three lower record values.

The present work is organized as follows: In Section \ref{sec: clrbtm}, we suggest the CLRBTM to generate continuous distribution. In addition, a special case of CLRBTFD based on exponential distribution is proposed in Section \ref{sec: clrbtm}. Section \ref{properties} provides some distributional properties of the suggested distribution. In Section \ref{Estimation methods}, we focus on the point estimation of the introduced distribution via nine different methods. Then, we design a comprehensive Monte Carlo (MC) simulation study to compare the performance of these estimators in Section \ref{simulation}. In Section \ref{sec:real}, two real-life data examples are presented to illustrate the usefulness of the proposed distribution in real-life data modeling. Lastly, concluding remarks are given in Section \ref{conc}. 

\section{Cubic lower record-based transmuted family of distributions}
\label{sec: clrbtm}
Balakrishnan and He \cite{balakrishnan2021record} mentioned the CLRBTM based on distributions of the first three lower record values. However, no explicit formulation of CLRBTM was given in \cite{balakrishnan2021record}. In this section, we present both the construction of the CLRBTM and a detailed account of the resulting CLRBTFD obtained through the CLRBTM defined as follows:

Let $X_1 $, $X_2$, and $X_3$  be a random sample with CDF $G(.)$ and PDF $g(.)$, and $X_{L\left( 1 \right)}$, $X_{L\left( 2 \right)} $, and $X_{L\left( 3 \right)} $ be lower records of same sample.
 
 Let us define a random variable $L^{*}$ 
\[
\begin{array}{l}
 L^{*}\overset{d}{=} \mbox{ }X_{L\left( 1 \right)} ,\mbox{ 
with probability }p _1, \\ 
 L^{*}\overset{d}{=} \mbox{ }X_{L\left( 2 \right)} ,\mbox{ 
with probability }p _2, \\ 
L^{*}\overset{d}{=} \mbox{ }X_{L\left( 3 \right)} ,\mbox{ 
with probability }1-p _1-p _2, \\
 \end{array}
\]
where $p_{1}+p_{2}=1$ and $L^{*}$ denotes a random variable from a distribution of CLRBTFD and the CDF of $L^{*}$ is 
\begin{eqnarray}
\label{clrbt: cdf}
F_{L^{*}}\left( x\right)  &=&p_{1}P\left( {X_{L\left( 1\right) }\leq x}\right)
+p_{2}P\left( {X_{L\left( 2\right) }\leq x}\right) +\left(
1-p_{1}-p_{2}\right) P\left( {X_{L\left( 3\right) }\leq x}\right)   \notag \\
&=&p_{1}G\left( x\right) +p_{2} \left[ G\left( x\right) \left(
	1-\ln \left( G\left( x\right) \right) \right) \right]+(1-p_{1}-p_{2})[G(x)(1-\ln G(x)+\frac{1}{2}(-\ln G(x))^2)] \notag \\
    &=&G(x)\left[1-(1-p_{1})\ln G(x)+\frac{1-p_{1}-p_{2}}{2}\big(-\ln G(x)\big)^{2}\right], 
\end{eqnarray}%
where $\ 0<p_{1},p_{2}<1,p_{1}+p_{2}\leq 1$. Thus, the corresponding PDF of $L^{*}$ is  
 
\begin{equation}
\label{clrbt: pdf}
f_{L^{*}}\left( x\right) =g(x)\Big[ p_{1} - p_{2}\ln G(x) + \frac{1-p_{1}-p_{2}}{2}\big(\ln G(x)\big)^{2}\Big] .
\end{equation}


\subsection{Cubic Lower Record-Based Transmuted Exponential Distribution}

In this section, we propose the cubic lower record-based transmuted exponential distribution (CLRBTE) using the CLRBTM.

Let \(X \sim \text{CLRBTE}(\lambda, p_1, p_2)\) denote a random variable.
By substituting the exponential CDF \(G(x) = 1 - e^{-\lambda x}\) into Eq. (\ref{clrbt: cdf}), the CDF and PDF of \(X\) are obtained as follows:
\begin{equation}
\label{clrbte: cdf}
F_{CLRBTE}(x) = \big(1 - e^{-\lambda x}\big) \left[ 1 - (1-p_1) \ln\big(1 - e^{-\lambda x}\big) + \frac{1-p_1-p_2}{2} \big(\ln(1 - e^{-\lambda x})\big)^2 \right].
\end{equation}
and

\begin{equation}
\label{clrbte: pdf}
f_{CLRBTE}(x) = \lambda e^{-\lambda x} \Big[ p_1 - p_2 \ln(1 - e^{-\lambda x}) + \frac{1-p_1-p_2}{2} \big(\ln(1 - e^{-\lambda x})\big)^2 \Big], \quad x \ge 0,
\end{equation}
respectively, where \(p_1, p_2 \in [0,1]\) and \(p_1 + p_2 \le 1\). The CLRBTE distribution reduces to the standard exponential distribution when \(p_1 = 1\) and \(p_2 = 0\).

\begin{figure} [H]
    \centering
    \includegraphics[width=0.6\linewidth]{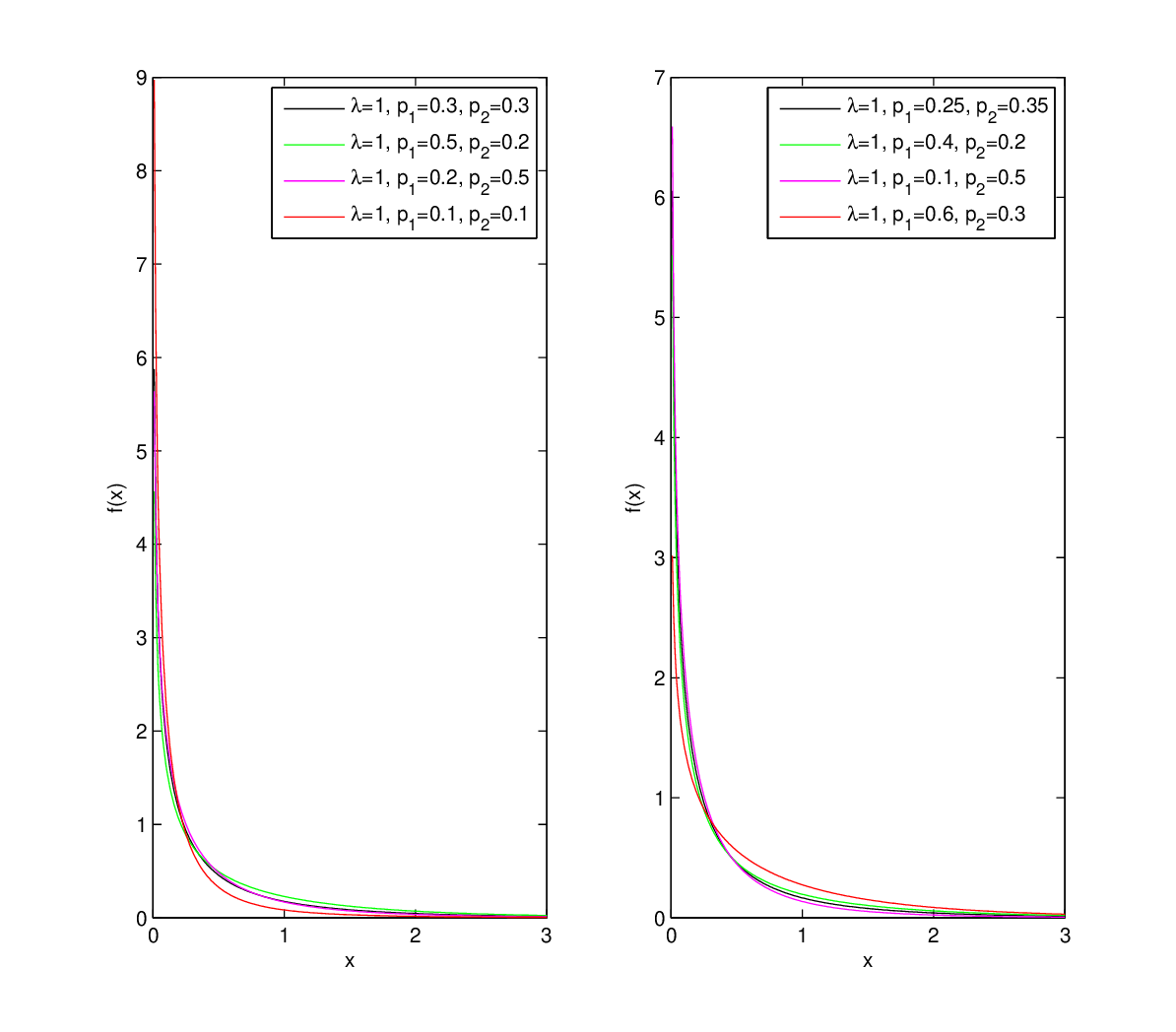}
    \caption{The PDFs of CLRBTE distribution under the different values of parameters}
\label{pdf}
\end{figure}

Figure~\ref{pdf} illustrates the possible PDF shapes associated with the CLRBTE distribution in the parameters selected values. From Figure \ref{pdf}, we clearly see that the CLRBTE distribution has a decreasing PDF.

\section{Some distributional properties}
\label{properties}
This section explores some statistical properties of the CLRBTE distribution, namely, hazard function (hf), moments, mean, variance, coefficient of variation, coefficient of skewness, and coefficient of kurtosis.





 \subsection{Hazard function}
The hf of the CLRBTE distribution is 
\begin{align}
h(x) &= \frac{f(x)}{1 - F(x)} \notag \\
&= 
\frac{\lambda e^{-\lambda x} \left[ p_1 - p_2 \ln(1 - e^{-\lambda x}) + \frac{1-p_1-p_2}{2} (\ln(1 - e^{-\lambda x}))^2 \right]}
{1 - (1 - e^{-\lambda x}) \left[ 1 - (1-p_1) \ln(1 - e^{-\lambda x}) + \frac{1-p_1-p_2}{2} (\ln(1 - e^{-\lambda x}))^2 \right]}.
\end{align}
where $F(.)$ and $f(.)$ are the CDF and PDF of CLRBTE distribution respectively given in Eqs. (\ref{clrbte: cdf}) and (\ref{clrbte: pdf})
Figure \ref{hf} illustrates the possible hazard shapes of the CLRBTE distribution.

\begin{figure} [H]
    \centering
    \includegraphics[width=0.6\linewidth]{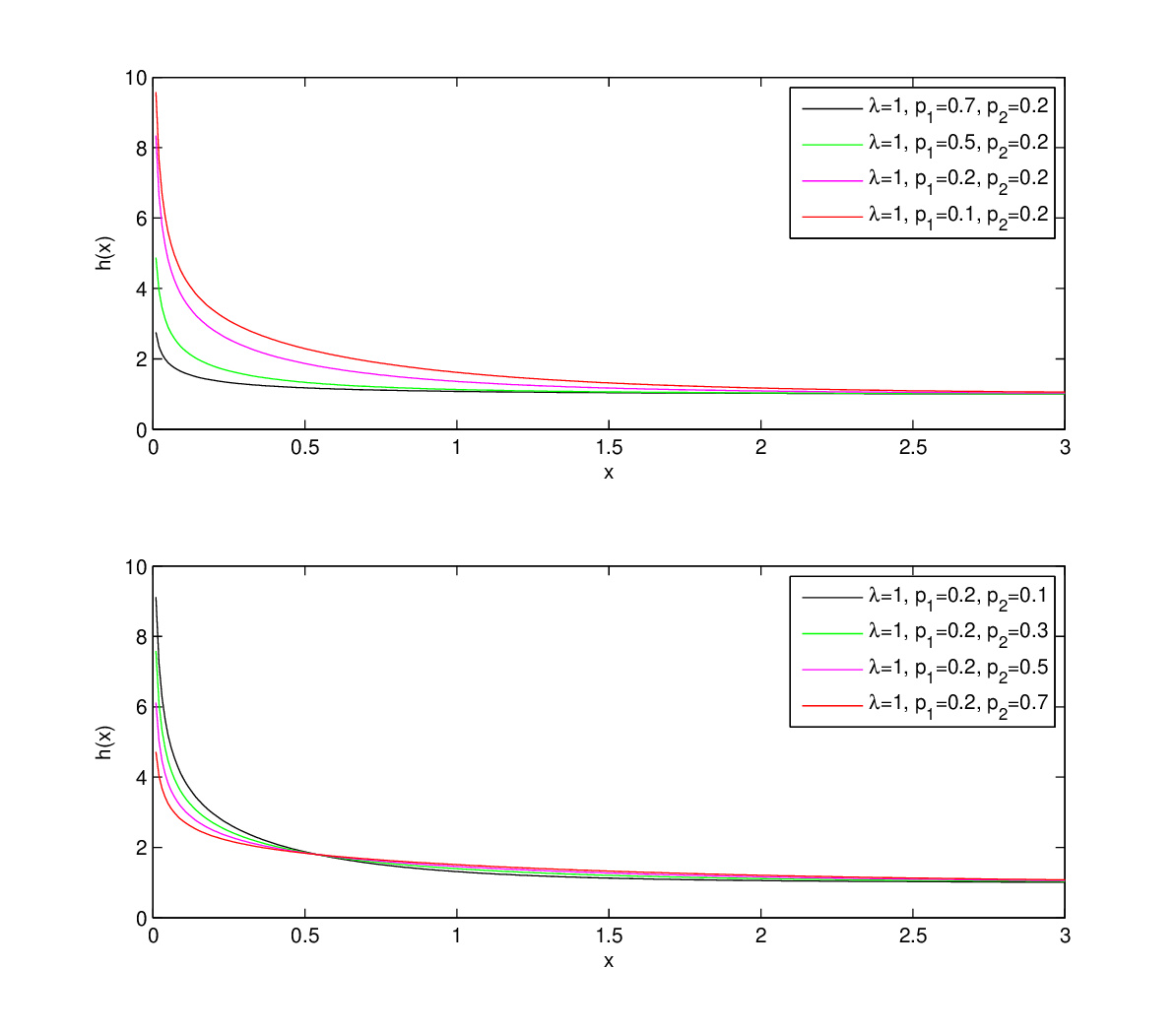}
    \caption{The hf shapes of the CLRBTE distribution in the different values of parameters}
\label{hf}
\end{figure}
Figure \ref{hf} displays that the hf is decreasing shape.

\subsection{Moments}
In this subsection, we present the $r$-th central moment for the $CLRBTE(\lambda,p_{1},p_{2})$ distribution via the Theorem \ref{teo:1}.
\begin{theorem}
\label{teo:1}
Let $X$ be a random variable from the $CLRBTE(\lambda,p_{1},p_{2})$ distribution. The $r$-th central moment is

\begin{align}
\label{moment}
\mathbb{E}[X^r] =& \int_0^{\infty} x^r f_{CLRBTE}(x) \, dx \\\nonumber
=&x^r\lambda e^{-\lambda x} \Big[ p_1 - p_2 \ln(1 - e^{-\lambda x}) + \frac{1-p_1-p_2}{2} \big(\ln(1 - e^{-\lambda x})\big)^2 \Big] \\\nonumber
=& \frac{1}{\lambda^r} \sum_{k_1=1}^{\infty} \cdots \sum_{k_r=1}^{\infty} \frac{1}{k_1 \cdots k_r} \left[ \frac{p_1}{K+1} + \frac{p_2}{(K+1)^2} + \frac{1-p_1-p_2}{(K+1)^3} \right], \quad K = k_1 + \dots + k_r.
\end{align}
\end{theorem}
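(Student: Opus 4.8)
The plan is to compute $\mathbb{E}[X^r]=\int_0^\infty x^r f_{CLRBTE}(x)\,dx$ directly, reducing it to a standard family of logarithmic integrals over $(0,1)$. After inserting the density from Eq.~(\ref{clrbte: pdf}), I would apply the change of variable $y=e^{-\lambda x}$, so that $x=-\frac1\lambda\ln y$, $\lambda e^{-\lambda x}\,dx=-\,dy$, and $x\in(0,\infty)$ corresponds to $y\in(0,1)$. This transforms the moment into
\begin{equation*}
\mathbb{E}[X^r]=\frac{1}{\lambda^r}\int_0^1(-\ln y)^r\left[p_1+p_2\big(-\ln(1-y)\big)+\frac{1-p_1-p_2}{2}\big(-\ln(1-y)\big)^2\right]dy,
\end{equation*}
where the three bracketed pieces carry the mixture weights of the first three lower-record densities.

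The decisive move is to expand the factor $(-\ln y)^r$, not the $\ln(1-y)$ terms. Using $-\ln y=\sum_{k=1}^\infty\frac{(1-y)^k}{k}$ for $y\in(0,1)$ and raising to the $r$-th power gives
\begin{equation*}
(-\ln y)^r=\sum_{k_1=1}^\infty\cdots\sum_{k_r=1}^\infty\frac{(1-y)^{K}}{k_1\cdots k_r},\qquad K=k_1+\cdots+k_r,
\end{equation*}
which is exactly where the $r$-fold sum in the statement originates. Interchanging summation and integration, every resulting term reduces to the elementary integral $\int_0^1 u^{K}(-\ln u)^m\,du=\frac{m!}{(K+1)^{m+1}}$ (substitute $u=1-y$, then $u=e^{-t}$ to obtain a gamma integral). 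Evaluating at $m=0,1,2$ yields $\frac{1}{K+1}$, $\frac{1}{(K+1)^2}$, and $\frac{2}{(K+1)^3}$; the last cancels the prefactor $\frac12$, and collecting the three contributions produces precisely $\frac{p_1}{K+1}+\frac{p_2}{(K+1)^2}+\frac{1-p_1-p_2}{(K+1)^3}$, which is the claimed bracket.

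The main obstacle I expect is justifying the term-by-term integration of the $r$-fold series against the $\ln(1-y)$ factors. Because $-\ln y\ge0$ and $-\ln(1-y)\ge0$ on $(0,1)$, all summands are nonnegative, so Tonelli's theorem legitimizes the interchange, while finiteness follows from $\int_0^1(-\ln y)^r(-\ln(1-y))^m\,dy<\infty$. An equivalent and perhaps cleaner write-up is to first establish the single-record moment $\mathbb{E}[X_{L(n)}^r]=\frac{1}{\lambda^r}\sum_{k_1,\dots,k_r}\frac{1}{k_1\cdots k_r\,(K+1)^{n}}$ for $n=1,2,3$ by the same substitution-and-expansion, and then read off the result from the mixture identity $\mathbb{E}[X^r]=p_1\mathbb{E}[X_{L(1)}^r]+p_2\mathbb{E}[X_{L(2)}^r]+(1-p_1-p_2)\mathbb{E}[X_{L(3)}^r]$.
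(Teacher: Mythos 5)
Your proposal is correct and follows essentially the same route as the paper: after the change of variable (yours, $y=e^{-\lambda x}$, is the paper's $u=1-e^{-\lambda x}$ up to the relabeling $u=1-y$), you expand the same logarithm into the same $r$-fold power series and evaluate the same standard integrals $\int_0^1 u^m(-\ln u)^j\,du$ for $j=0,1,2$. Your Tonelli justification for the interchange is slightly more explicit than the paper's appeal to absolute convergence, but the argument is the same.
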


\begin{proof}
To obtain the r-th moment, we apply the substitution \(u = 1 - e^{-\lambda x}\), which implies \(x = -\frac{1}{\lambda}\ln(1-u)\) and \(dx = \frac{du}{\lambda(1-u)}\). Under this transformation, the integral becomes
\[
\mathbb{E}[X^r] = \frac{1}{\lambda^r} \int_0^1 \big(-\ln(1-u)\big)^r \Big[ p_1 - p_2 \ln u + \frac{1-p_1-p_2}{2} (\ln u)^2 \Big] du.
\]

Expanding \(-\ln(1-u)\) as a power series
\[
-\ln(1-u) = \sum_{k=1}^{\infty} \frac{u^k}{k}, \quad |u|<1,
\]
and interchanging summation and integration (justified by absolute convergence), we obtain
\[
\mathbb{E}[X^r] = \frac{1}{\lambda^r} \sum_{k_1=1}^{\infty} \cdots \sum_{k_r=1}^{\infty} \frac{1}{k_1 \cdots k_r} \int_0^1 u^{k_1+\dots+k_r} \Big[ p_1 - p_2 \ln u + \frac{1-p_1-p_2}{2} (\ln u)^2 \Big] du.
\]

Using the standard integrals
\[
\int_0^1 u^m du = \frac{1}{m+1}, \quad
\int_0^1 u^m \ln u \, du = -\frac{1}{(m+1)^2}, \quad
\int_0^1 u^m (\ln u)^2 \, du = \frac{2}{(m+1)^3},
\]
with \(m = k_1 + \dots + k_r\), we have
\[
\int_0^1 u^m \Big[ p_1 - p_2 \ln u + \frac{1-p_1-p_2}{2} (\ln u)^2 \Big] du
= \frac{p_1}{m+1} + \frac{p_2}{(m+1)^2} + \frac{1-p_1-p_2}{(m+1)^3}.
\]
Therefore, the moment r-th can be expressed as a convergent series
\[
\mathbb{E}[X^r] = \frac{1}{\lambda^r} \sum_{k_1=1}^{\infty} \cdots \sum_{k_r=1}^{\infty} \frac{1}{k_1 \cdots k_r} \left[ \frac{p_1}{K+1} + \frac{p_2}{(K+1)^2} + \frac{1-p_1-p_2}{(K+1)^3} \right], \quad K = k_1 + \dots + k_r.
\]
Therefore, the proof is finished.
\end{proof}

\subsection{Coefficients of Variation, Skewness and Kurtosis}
In this subsection, we introduce the Coefficient of skewness $\gamma_1$ and coefficient of kurtosis $\gamma_2$ regarding CLRBTE $\left( \lambda,p_{1},p_{2}\right) $ distribution are 
given in Eqs. (\ref{cs}) and (\ref{ck}).
\begin{equation}
\label{cs}
\gamma_1=\frac{\mu'_3 - 3\mu'_2\mu'_1 + 2(\mu'_1)^3}{(\sigma)^3}
\end{equation}
and
\begin{equation}
\label{ck}
\gamma_2=\frac{\mu'_4 - 4\mu'_3\mu + 6\mu'_2\mu^2 - 3\mu^4}{(\sigma)^4},
\end{equation}
respectively, where first four moments $\mu'_i$, $i=1,2,3,4$ can be obtained by substituting $r=1,2,3,4$ in Eq. (\ref{moment}), $\mu'_1$ is mean,  and $\sigma^2$ is variance, \( \sigma^2 = \mu'_2 - (\mu'_1)^2 \).

Also, the variation coefficient is
\begin{equation}
\text{CV} = \frac{\sigma}{\mu}.
\end{equation}

\begin{table}[H]
\centering
\caption{Some descriptive statistics in various values of parameters}
\label{tab:parameter_effect}
\begin{tabular}{ccc|cccccc}
\hline
$\lambda$ & $p_1$ &  $p_2$ &  $\mu'_1$ & $\sigma^2$ & $\sigma$ & CV & $\gamma_1$ & $\gamma_2$ \\
\hline
0.5 & 0.5  & 0.5 & 1.3550 & 2.7758  & 1.666 & 1.2295 & 2.5782 & 12.9647 \\
1 & 0.5  & 0.5  & 0.6775 & 0.6939 & 0.8330 & 1.2295 & 2.5782 & 12.9647 \\
2 & 0.5  & 0.5  & 0.3387 & 0.1734 & 0.4165 & 1.2295 & 2.5782 & 12.9647 \\
1.5 & 0.1  & 0.6  & 0.2392 & 0.1226 & 0.3502 & 1.4637 & 3.6422 & 25.1704 \\
1.5 & 0.3  & 0.6  & 0.3522 & 0.2273 & 0.4768 & 1.3537 & 3.0116 & 16.9262  \\
1.5 & 0.6  & 0.6  & 0.5216 & 0.3365 & 0.5801 & 1.1121 & 2.4077 & 11.7738 \\
1.5 & 0.9  & 0.6  & 0.6910 & 0.3883 & 0.6231 & 0.9018 & 2.1977 & 10.2468 \\
0.7 & 0.4  & 0.1  & 0.7314 & 1.2324 & 1.1101 & 1.5177 & 3.0031 & 15.9308 \\
0.7 & 0.4  & 0.4  & 0.8180 & 1.2422 & 1.1145 & 1.3624 & 2.8380 & 14.9678 \\
0.7 & 0.4  & 0.7  & 0.9046 & 1.2369 & 1.1122 & 1.2294 & 2.7246 & 14.3841 \\
0.7 & 0.4  & 0.9  & 0.9623 & 1.2251 & 1.1068 & 1.1501 & 2.6775 & 14.1848 \\

\hline
\end{tabular}
\end{table}

Table \ref{tab:parameter_effect} provides the descriptive statistics of the CLRBTE distribution for selected combinations of the parameters $\lambda$, $p_{1}$, and $p_{2}$. The scale parameter $\lambda$ primarily regulates the overall magnitude of the distribution, as reflected in the consistent decrease of both mean and variance with higher values of $\lambda$, whereas shape-related measures such as the coefficient of variation, skewness, and kurtosis remain unchanged, indicating that scaling influences dispersion but leaves the distributional form intact.
 In contrast, the transmutation parameters $p_{1}$ and $p_{2}$ play a pivotal role in the determination of higher-order characteristics. Lower values of $p_{1}$, particularly when combined with moderate or large $p_{2}$, accentuate variability, asymmetry, and tail heaviness, leading to more leptokurtic and skewed profiles. In contrast, larger values of $p_{1}$ temper these effects, producing distributions that are closer to symmetry with lighter tails. Considered together, these findings underscore a dual parameterization mechanism: $\lambda$ governs the overall scale, while $p_{1}$ and $p_{2}$ impart substantial flexibility in shaping the distribution, thus improving the adaptability of the CLRBTE distribution for various empirical applications. 


\section{Point Estimation}\label{Estimation methods}
In this section, we tackle the point estimation of the CLRBTE distribution. Nine methods are utilized to estimate the parameters $\lambda$, $p_1$, and $p_2$. 

\subsection{Maximum likelihood Estimator}
This subsection explores  ML estimators (MLEs) of the parameters $\lambda$, $p_1$, and $p_2$. 

Let ${X_1},{X_2}...{X_n}$ be iid random variables from the CLRBTE and ${x_1},{x_2}...{x_n}$  denote the observed values of the sample. Then, the log-likelihood function is as follows:

\begin{align}
\label{mle}
\ell(x; \lambda,p_1,p_2) = n \ln \lambda - \lambda \sum_{i=1}^n x_i 
+ \sum_{i=1}^n \ln \left( 
p_1 - p_2 \ln\!\big(1-e^{-\lambda x_i}\big) 
+ \frac{1-p_1-p_2}{2} \left[\ln\!\big(1-e^{-\lambda x_i}\big)\right]^2
\right).
\end{align}
The first-order partial derivatives of $\ell(x; \lambda,p_1,p_2)$ over the parameters $\lambda$, $p_1$, and $p_2$ are given by

\begin{align}
\label{der_lambda}
\frac{\partial \ell(x; \lambda,p_1,p_2)}{\partial \lambda}
&= \frac{n}{\lambda} - \sum_{i=1}^n x_i \notag \\
&\quad+ \sum_{i=1}^n
\frac{\,x_i e^{-\lambda x_i}\,
\Big[-p_2 + (1-p_1-p_2)\ln\!\big(1-e^{-\lambda x_i}\big)\Big]}
{(1-e^{-\lambda x_i})\Big(p_1 - p_2\ln\!\big(1-e^{-\lambda x_i}\big)
+ \tfrac{1-p_1-p_2}{2}\,[\ln(1-e^{-\lambda x_i})]^2\Big)}.
\end{align}

\begin{align}
\label{der_p1}
\frac{\partial \ell(x; \lambda,p_1,p_2)}{\partial p_1}
&= \sum_{i=1}^n
\frac{\,1 - \tfrac{1}{2}\,[\ln(1-e^{-\lambda x_i})]^2\,}
{\,p_1 - p_2\ln(1-e^{-\lambda x_i})
+ \tfrac{1-p_1-p_2}{2}\,[\ln(1-e^{-\lambda x_i})]^2\,}.
\end{align}

\begin{align}
\label{der_p2}
\frac{\partial \ell(x; \lambda,p_1,p_2)}{\partial p_2}
&= \sum_{i=1}^n
\frac{-\ln(1-e^{-\lambda x_i}) - \tfrac{1}{2}\,[\ln(1-e^{-\lambda x_i})]^2}
{\,p_1 - p_2\ln(1-e^{-\lambda x_i})
+ \tfrac{1-p_1-p_2}{2}\,[\ln(1-e^{-\lambda x_i})]^2\,},
\end{align}

respectively. The MLEs of the parameters $\lambda$, $p_1$, and $p_2$ are  values of parameters that maximize $\ell(x; \lambda,p_1,p_2)$. The MLEs are simultaneously solution the Equations obtained by setting each of the Eqs. (\ref{der_lambda}), (\ref{der_p1}), and (\ref{der_p2}) equal to zero.

Some numerical methods, namely, Newton-Raphson and Nelder Mead can be utilized to solve this optimization problem
\subsection{Least squares estimator}
In this subsection, we examine the LS estimators (LSEs) the parameters $\lambda$, $p_1$, and $p_2$. The LSE is proposed by \cite{swain1988least} as an alternative to the MLE. The LSEs can be obtained by minimizing the function given in Eq. (\ref{lse}).

\begin{align}
\label{lse}
LS(x_{i})&=\sum_{i=1}^{n}\left[F(x_{i:n})-\frac{i}{n+1}\right]^2,
\end{align}
where ${x_{i:n}}$ for $i = 1,2...n$ denote the order statistics.
\subsection{Weighted least squares estimator}
This subsection introduces the WLS estimators (WLSEs) of parameters $\lambda$, $p_1$, and $p_2$ for the CLRBTE via the method proposed by \cite{swain1988least}. We derive the WLSEs by minimizing the Eq. (\ref{wlse}).
\begin{align}
\label{wlse}
WLS(x_{i})&=\sum_{i=1}^{n}\frac{(n+1)^2(n+2)}{i(n-i+1)}\left[F(x_{i:n})-\frac{i}{n+1}\right]^2.
\end{align}

\subsection{Anderson-Darling estimator}
This subsection discusses the AD estimators (ADEs) of parameters $\lambda$, $p_1$, and $p_2$. The ADEs of the parameters $\lambda$, $p_1$, and $p_2$ can be derived by minimizing Eq (\ref{ade}).

\begin{align}
\label{ade}
AD(x_{i})&=-n-\frac{1}{n}\sum_{i=1}^{n}(2i-1)\left[\ln F(x_{i:n})+\ln (1-F(x_{n-i-1:n}))\right].
\end{align}


\subsection{Cram\'{e}r-von Mises estimator}
In this subsection, we study the CvM estimators (CvMEs) of the parameters $\lambda$, $p_1$, and $p_2$. The Cram\'{e}r-von Mises method is suggested by \cite{choi1968estimation}. We compute the CvMEs by maximizing the function in Eq. (\ref{cvme}).

\begin{align}
\label{cvme}
C(x_{i})&=\frac{1}{12n}+\sum_{i=1}^{n}\left[F(x_{i:n})-\frac{2i-1}{2n}\right]^2.
\end{align}

\subsection{Maximum product of spacings estimator}
This subsection introduces the MPS method proposed by \cite{cheng1983estimating} and \cite{ranneby1984maximum} as an alternative to the ML method, the MPS method based on maximizing the function in Eq. (\ref{mpse}) to obtain the MPS estimators (MPSEs). 
\begin{align}
\label{mpse}
\delta\left(x_{i} \right) =\frac{1}{n+1}\sum_{i=1}^{n+1}\ln I_{i}(x_{i}),
\end{align}

where $I_{i}(x_{i})=F(x_{i:n})-F(x_{i-1:n})$, $F(x_{0:n})=0$ and $F(x_{n+1:n})=1.$

\subsection{Right tail Anderson Darling estimator}
In this subsection, we explore the RTAD estimators (RTADEs) of parameters $\lambda$, $p_1$, and $p_2$. We derive the RTADEs by minimizing the Eq. (\ref{RTADE}).

\begin{equation} \label{RTADE}
\begin{aligned}
L(\lambda,p_1,p_2) 
&= \frac{n}{2} - 2\sum\limits_{i = 1}^n {F\left( {{x_{i:n}}} \right) - \frac{1}{n}} \sum\limits_{i = 1}^n {\left( {2i - 1} \right)\ln \left( {1 - F\left( {{x_{n - i + 1}}} \right)} \right)} .
\end{aligned}
\end{equation}

\subsection{Minimum spacing absolute distance estimator}
In this subsection, we suggest the MSAD estimators (MSADEs) of the parameters $\lambda$, $p_1$, and $p_2$. The MSADEs of parameters $\lambda$, $p_1$, and $p_2$ are computed by minimizing the function given in Eq. (\ref{msad}).

\begin{equation} \label{msad}
\begin{aligned}
\Lambda (\lambda,p_1,p_2 \mid x_i) 
= \sum\limits_{i = 1}^{n + 1} {\left| {{I_{i}} - \frac{1}{{n + 1}}} \right|}
\end{aligned}
\end{equation}

\subsection{Minimum spacing absolute-log distance estimator}
In this subsection, we tackle the MSALD estimators (MSALDEs) of parameters $\lambda$, $p_1$, and $p_2$. \cite{torabi2008general} suggested the MSALD method as an alternative the ML. We derive the MSALDEs of the parameters $\lambda$, $p_1$, and $p_2$ by minimizing the Eq. (\ref{MSALD})

\begin{equation} \label{MSALD}
\begin{aligned}
\Psi (\lambda,p_1,p_2 \mid x_i) 
= \sum\limits_{i = 1}^{n + 1} {\left| {{I_i} - \log \frac{1}{{n + 1}}} \right|}.
\end{aligned}
\end{equation}


\section{Simulation} \label{simulation}
This section provides an extensive MC simulation study to evaluate the performance of the mentioned estimators given in Section \ref{Estimation methods}. In MC simulation study, we carry out the following settings:
\begin{itemize}
    \item The sample sizes; $n=50,100,200,500,1000$
\item The number of repetition: 5000
\item Initial values of parameters $\lambda$, $p_1$ and $p_2$  are taken as follows: \newline
$Senerio_{I} = \left( {\lambda  = 1.5, p_1=0.5, p_2 =0.3} \right)$,\\
$Senerio_{II} = \left( {\lambda  = 0.5, p_1=0.4, p_2 =0.5} \right)$,\\
$Senerio_{III} = \left( {\lambda  = 0.6, p_1=0.3, p_2 =0.2} \right)$,\\
$Senerio_{IV} = \left( {\lambda  = 2, p_1=0.2, p_2 =0.4} \right)$,\\ 
\end{itemize}

We utilize bias, mean squared error (MSE), and mean relative error (MRE) in order to evaluate the performance of the examined estimators. These measures are given by 

$$Bias=\frac{1}{5000}\sum\limits_{i=1}^{5000}\left( \hat{\Upsilon}-\Upsilon
\right),$$

$$MSE=\frac{1}{5000}\sum\limits_{i=1}^{5000}\left( \hat{\Upsilon%
}-\Upsilon \right) ^{2},$$

$$MRE=\frac{1}{5000}\sum_{i=1}^{5000}\frac{|\hat{\Upsilon}-\Upsilon|}{\Upsilon},$$
where $\Upsilon=\left(\lambda,p_1,p_2\right)$. 
\subsection{Random sample generation}
To generate random samples from the $CLRBTE(\lambda, p_1, p_2)$ distribution, we use an acceptance-rejection (AR) sampling algorithm.
We choose the Weibull distribution as the proposal distribution in the AR algorithm. The AR algorithm is given as follows:

\textbf{Algorithm 1.}

\textbf{A1.} Generate data on random variable $Y \sim Weibull(\gamma,\nu)$ 
with the PDF $g$ given as follows: 
\begin{equation*}
g\left( \gamma,\nu \right)=\gamma \nu x^{\nu -1}e^{-\gamma
x^\nu}.
\end{equation*}

\textbf{A2.} Generate $U$ from standard uniform distribution(independent of $%
Y$).

\textbf{A3.} If%
\begin{equation*}
U<\frac{f\left( Y;\lambda,p_1,p_2\right) }{k\times g\left(
Y;\gamma,\nu \right) }
\end{equation*}%
then set $X=Y$ (\textquotedblleft accept\textquotedblright ); otherwise go
back to A1 (\textquotedblleft reject\textquotedblright ), where the PDF $f$ $%
\left( .\right) $ is given as in Eq. (\ref{clrbte: pdf})  and 
\begin{equation*}
k=\underset{z\in 
\mathbb{R}
_{+}}{\max }\frac{f\left( z;\lambda,p_1,p_2\right) }{g\left(
z;\gamma,\nu \right) }.
\end{equation*}%

In MC simulation study, We use Algorithm 1 to generate random sample. Furthermore, all optimization problems are solved via \textsf{R} software \citep{Rsoftware} and \texttt{optim()} function, with the BFGS (Broyden--Fletcher--Goldfarb--Shanno) \cite{nocedal2006numerical} algorithm.

We report the findings of MC simulation study in Tables \ref{sim:tab1}-\ref{sim:tab4}.

\begin{table}[H]
\centering
\caption{The simulation results in $\lambda=1.5$, $p_1=0.5$ and $p_2=0.3$}
\scalebox{0.9} {\ \label{sim:tab1}
\begin{tabular}{ccccccccccc}\hline
          &      &         & Bias    &         &        & MSE    &        &        & MRE    &        \\\hline
Estimator & $n$    & $\hat\lambda$   & $\hat p_1$   & $\hat p_2$  & $\hat\lambda$  & $\hat p_1$   & $\hat p_2$ & $\hat\lambda$   & $\hat p_1$   & $\hat p_2$  \\\hline

MLE    & 50   & -0.0817 & -0.0702 & 0.0669 & 0.1155 & 0.0267 & 0.0327 & 0.1793 & 0.2541 & 0.4837 \\
       & 100  & -0.0773 & -0.0619 & 0.0567 & 0.0946 & 0.0251 & 0.0318 & 0.1675 & 0.2503 & 0.4736 \\
       & 200  & -0.0542 & -0.0387 & 0.0454 & 0.0791 & 0.0199 & 0.0290 & 0.1468 & 0.2168 & 0.4673 \\
       & 500  & -0.0165 & -0.0368 & 0.0377 & 0.0521 & 0.0174 & 0.0268 & 0.1158 & 0.2115 & 0.4287 \\
       & 1000 & 0.0054  & -0.0109 & 0.0068 & 0.0240 & 0.0090 & 0.0140 & 0.0782 & 0.1429 & 0.3122 \\\hline
LSE    & 50   & -0.3657 & -0.1937 & 0.0991 & 0.2744 & 0.0624 & 0.0485 & 0.3053 & 0.4519 & 0.6268 \\
       & 100  & -0.2759 & -0.1618 & 0.0811 & 0.2008 & 0.0557 & 0.0434 & 0.2593 & 0.4226 & 0.5871 \\
       & 200  & -0.1748 & -0.1070 & 0.0780 & 0.1299 & 0.0401 & 0.0427 & 0.2040 & 0.3398 & 0.5790 \\
       & 500  & -0.0970 & -0.0579 & 0.0437 & 0.0807 & 0.0273 & 0.0334 & 0.1579 & 0.2717 & 0.5083 \\
       & 1000 & -0.0357 & -0.0200 & 0.0082 & 0.0475 & 0.0160 & 0.0219 & 0.1159 & 0.2015 & 0.3982 \\\hline
WLSE   & 50   & -0.3604 & -0.1953 & 0.1177 & 0.2353 & 0.0609 & 0.0448 & 0.2806 & 0.4346 & 0.5916 \\
       & 100  & -0.2718 & -0.1625 & 0.1107 & 0.1761 & 0.0523 & 0.0402 & 0.2406 & 0.3957 & 0.5657 \\
       & 200  & -0.1767 & -0.1088 & 0.0867 & 0.1174 & 0.0382 & 0.0399 & 0.1867 & 0.3145 & 0.5457 \\
       & 500  & -0.0755 & -0.0475 & 0.0422 & 0.0570 & 0.0205 & 0.0271 & 0.1280 & 0.2272 & 0.4434 \\
       & 1000 & -0.0236 & -0.0135 & 0.0067 & 0.0292 & 0.0106 & 0.0162 & 0.0887 & 0.1594 & 0.3350 \\\hline
ADE    & 50   & -0.2347 & -0.1546 & 0.1243 & 0.1724 & 0.0467 & 0.0447 & 0.2366 & 0.3716 & 0.5845 \\
       & 100  & -0.2265 & -0.1477 & 0.1231 & 0.1491 & 0.0459 & 0.0416 & 0.2203 & 0.3703 & 0.5675 \\
       & 200  & -0.1563 & -0.1020 & 0.0901 & 0.1029 & 0.0351 & 0.0388 & 0.1773 & 0.3025 & 0.5340 \\
       & 500  & -0.0753 & -0.0487 & 0.0455 & 0.0562 & 0.0204 & 0.0270 & 0.1273 & 0.2267 & 0.4432 \\
       & 1000 & -0.0243 & -0.0146 & 0.0087 & 0.0294 & 0.0107 & 0.0161 & 0.0885 & 0.1591 & 0.3335 \\\hline
CvME   & 50   & -0.2392 & -0.1707 & 0.1427 & 0.1982 & 0.0523 & 0.0556 & 0.2525 & 0.4169 & 0.6639 \\
       & 100  & -0.2114 & -0.1487 & 0.1259 & 0.1660 & 0.0504 & 0.0538 & 0.2338 & 0.4032 & 0.6578 \\
       & 200  & -0.1445 & -0.1003 & 0.0883 & 0.1169 & 0.0377 & 0.0449 & 0.1936 & 0.3303 & 0.5870 \\
       & 500  & -0.0853 & -0.0552 & 0.0469 & 0.0769 & 0.0263 & 0.0334 & 0.1545 & 0.2677 & 0.5078 \\
       & 1000 & -0.0303 & -0.0188 & 0.0098 & 0.0462 & 0.0157 & 0.0219 & 0.1149 & 0.2000 & 0.3972 \\\hline
MPSE   & 50   & -0.3487 & -0.1682 & 0.0817 & 0.2479 & 0.0584 & 0.0382 & 0.2803 & 0.3977 & 0.5275 \\
       & 100  & -0.3225 & -0.1594 & 0.0741 & 0.2184 & 0.0534 & 0.0348 & 0.2621 & 0.3797 & 0.5038 \\
       & 200  & -0.2534 & -0.1390 & 0.0679 & 0.1704 & 0.0534 & 0.0343 & 0.2173 & 0.3608 & 0.4852 \\
       & 500  & -0.1553 & -0.0893 & 0.0270 & 0.1017 & 0.0371 & 0.0267 & 0.1559 & 0.2807 & 0.4469 \\
       & 1000 & -0.0704 & -0.0385 & 0.0198 & 0.0435 & 0.0159 & 0.0178 & 0.0961 & 0.1740 & 0.3449 \\\hline
TADE   & 50   & -0.2227 & -0.1321 & 0.0761 & 0.1725 & 0.0457 & 0.0410 & 0.2334 & 0.3567 & 0.5501 \\
       & 100  & -0.2175 & -0.1238 & 0.0661 & 0.1577 & 0.0422 & 0.0403 & 0.2234 & 0.3424 & 0.5480 \\
       & 200  & -0.1572 & -0.0953 & 0.0491 & 0.1087 & 0.0366 & 0.0367 & 0.1794 & 0.3069 & 0.5224 \\
       & 500  & -0.0765 & -0.0461 & 0.0349 & 0.0585 & 0.0221 & 0.0308 & 0.1268 & 0.2337 & 0.4756 \\
       & 1000 & -0.0292 & -0.0170 & 0.0104 & 0.0294 & 0.0113 & 0.0183 & 0.0876 & 0.1618 & 0.3518 \\\hline
MSADE  & 50   & -0.2380 & -0.1109 & 0.0482 & 0.1518 & 0.0349 & 0.0320 & 0.2077 & 0.2979 & 0.4817 \\
       & 100  & -0.2223 & -0.1037 & 0.0361 & 0.1447 & 0.0307 & 0.0289 & 0.2028 & 0.2843 & 0.4564 \\
       & 200  & -0.1595 & -0.0852 & 0.0283 & 0.0983 & 0.0304 & 0.0287 & 0.1610 & 0.2689 & 0.4559 \\
       & 500  & -0.0829 & -0.0500 & 0.0161 & 0.0599 & 0.0213 & 0.0276 & 0.1235 & 0.2211 & 0.4428 \\
       & 1000 & -0.0433 & -0.0244 & 0.0047 & 0.0351 & 0.0132 & 0.0202 & 0.0937 & 0.1706 & 0.3725 \\\hline
MSALDE & 50   & -0.2909 & -0.1551 & 0.0713 & 0.2344 & 0.0577 & 0.0430 & 0.2674 & 0.3980 & 0.5649 \\
       & 100  & -0.2840 & -0.1488 & 0.0674 & 0.2122 & 0.0526 & 0.0415 & 0.2556 & 0.3836 & 0.5440 \\
       & 200  & -0.2184 & -0.1224 & 0.0581 & 0.1580 & 0.0513 & 0.0388 & 0.2120 & 0.3589 & 0.5224 \\
       & 500  & -0.1339 & -0.0783 & 0.0392 & 0.0980 & 0.0353 & 0.0359 & 0.1578 & 0.2812 & 0.5185 \\
       & 1000 & -0.0650 & -0.0373 & 0.0254 & 0.0474 & 0.0177 & 0.0221 & 0.1039 & 0.1907 & 0.3868 \\\hline
\end{tabular}
}
\end{table}


\begin{table}[H]
\centering
\caption{The simulation results in $\lambda=0.5$, $p_1=0.4$ and $p_2=0.5$}
\scalebox{0.9} {\ \label{sim:tab2}
\begin{tabular}{ccccccccccc}\hline
          &      &         & Bias    &         &        & MSE    &        &        & MRE    &        \\\hline
Estimator & $n$    & $\hat\lambda$   & $\hat p_1$   & $\hat p_2$  & $\hat\lambda$  & $\hat p_1$   & $\hat p_2$ & $\hat\lambda$   & $\hat p_1$   & $\hat p_2$  \\\hline

MLE    & 50   & 0.0767  & 0.0598  & -0.0517 & 0.0519 & 0.0459 & 0.0778 & 0.3493 & 0.4384 & 0.4778 \\
       & 100  & 0.0456  & 0.0371  & -0.0505 & 0.0307 & 0.0413 & 0.0733 & 0.2731 & 0.4272 & 0.4621 \\
       & 200  & 0.0235  & 0.0317  & -0.0340 & 0.0205 & 0.0389 & 0.0672 & 0.2241 & 0.3982 & 0.4372 \\
       & 500  & 0.0149  & 0.0205  & -0.0332 & 0.0139 & 0.0355 & 0.0545 & 0.1908 & 0.3942 & 0.3887 \\
       & 1000 & 0.0024  & 0.0061  & -0.0282 & 0.0081 & 0.0245 & 0.0354 & 0.1426 & 0.3155 & 0.3072 \\\hline
LSE    & 50   & 0.0218  & 0.0605  & -0.3040 & 0.0731 & 0.1186 & 0.4328 & 0.4138 & 0.6430 & 1.0035 \\
       & 100  & 0.0213  & 0.0410  & -0.1742 & 0.0522 & 0.0848 & 0.2185 & 0.3389 & 0.5571 & 0.7228 \\
       & 200  & 0.0177  & 0.0383  & -0.1158 & 0.0316 & 0.0702 & 0.1491 & 0.2742 & 0.5394 & 0.6079 \\
       & 500  & 0.0136  & 0.0322  & -0.0845 & 0.0202 & 0.0484 & 0.0871 & 0.2200 & 0.4608 & 0.4771 \\
       & 1000 & 0.0030  & 0.0257  & -0.0601 & 0.0109 & 0.0314 & 0.0518 & 0.1738 & 0.3801 & 0.3739 \\\hline
WLSE   & 50   & -0.0593 & -0.0619 & -0.2289 & 0.0554 & 0.0910 & 0.4196 & 0.3751 & 0.5881 & 0.9259 \\
       & 100  & -0.0438 & -0.0357 & -0.0368 & 0.0319 & 0.0621 & 0.1445 & 0.2847 & 0.5139 & 0.6127 \\
       & 200  & -0.0231 & -0.0228 & -0.0131 & 0.0230 & 0.0589 & 0.1014 & 0.2438 & 0.5134 & 0.5244 \\
       & 500  & -0.0126 & -0.0221 & -0.0220 & 0.0155 & 0.0436 & 0.0662 & 0.2017 & 0.4424 & 0.4313 \\
       & 1000 & 0.0034  & 0.0093  & -0.0352 & 0.0082 & 0.0250 & 0.0398 & 0.1501 & 0.3350 & 0.3312 \\\hline
ADE    & 50   & -0.0144 & -0.0108 & -0.1276 & 0.0460 & 0.0837 & 0.2516 & 0.3388 & 0.5766 & 0.7054 \\
       & 100  & 0.0086  & 0.0103  & -0.0510 & 0.0332 & 0.0539 & 0.1112 & 0.2831 & 0.4779 & 0.5378 \\
       & 200  & 0.0017  & -0.0068 & -0.0451 & 0.0222 & 0.0526 & 0.0904 & 0.2362 & 0.4909 & 0.4975 \\
       & 500  & 0.0004  & 0.0057  & -0.0465 & 0.0144 & 0.0382 & 0.0633 & 0.1936 & 0.4183 & 0.4200 \\
       & 1000 & 0.0003  & 0.0017  & -0.0306 & 0.0080 & 0.0252 & 0.0392 & 0.1491 & 0.3357 & 0.3292 \\\hline
CvME   & 50   & 0.0909  & 0.0939  & -0.1706 & 0.0810 & 0.1318 & 0.3346 & 0.4138 & 0.6573 & 0.8770 \\
       & 100  & 0.0609  & 0.0588  & -0.1165 & 0.0538 & 0.0876 & 0.1916 & 0.3334 & 0.5486 & 0.6765 \\
       & 200  & 0.0458  & 0.0600  & -0.1042 & 0.0330 & 0.0703 & 0.1422 & 0.2730 & 0.5275 & 0.5949 \\
       & 500  & 0.0340  & 0.0418  & -0.0792 & 0.0203 & 0.0475 & 0.0853 & 0.2169 & 0.4512 & 0.4725 \\
       & 1000 & 0.0191  & 0.0308  & -0.0597 & 0.0108 & 0.0308 & 0.0509 & 0.1715 & 0.3742 & 0.3704 \\\hline
MPSE   & 50   & -0.1279 & -0.0713 & -0.6228 & 0.0693 & 0.1855 & 1.9005 & 0.4440 & 0.6591 & 1.4353 \\
       & 100  & -0.0941 & -0.0790 & -0.1894 & 0.0451 & 0.0530 & 0.2454 & 0.3498 & 0.5196 & 0.6977 \\
       & 200  & -0.0752 & -0.0608 & -0.0619 & 0.0318 & 0.0546 & 0.0978 & 0.2935 & 0.4965 & 0.4990 \\
       & 500  & -0.0511 & -0.0549 & -0.0328 & 0.0207 & 0.0473 & 0.0638 & 0.2358 & 0.4727 & 0.4138 \\
       & 1000 & -0.0397 & 0.0448  & 0.0109  & 0.0123 & 0.0358 & 0.0386 & 0.1783 & 0.3896 & 0.3290 \\\hline
TADE   & 50   & -0.0253 & -0.0232 & -0.2083 & 0.0572 & 0.0997 & 0.3458 & 0.3764 & 0.6191 & 0.8542 \\
       & 100  & -0.0257 & -0.0142 & -0.1014 & 0.0349 & 0.0674 & 0.1475 & 0.2963 & 0.5384 & 0.6186 \\
       & 200  & -0.0136 & -0.0085 & -0.0495 & 0.0232 & 0.0535 & 0.1045 & 0.2443 & 0.4992 & 0.5281 \\
       & 500  & -0.0083 & -0.0028 & -0.0473 & 0.0149 & 0.0418 & 0.0710 & 0.1998 & 0.4373 & 0.4344 \\
       & 1000 & -0.0053 & 0.0006  & -0.0298 & 0.0085 & 0.0264 & 0.0423 & 0.1513 & 0.3414 & 0.3373 \\\hline
MSADE  & 50   & -0.0638 & -0.0235 & -0.2072 & 0.0427 & 0.0427 & 0.1803 & 0.3405 & 0.3807 & 0.5908 \\
       & 100  & -0.0389 & -0.0203 & -0.1011 & 0.0281 & 0.0313 & 0.0959 & 0.2736 & 0.3782 & 0.4394 \\
       & 200  & -0.0288 & -0.0110 & -0.0814 & 0.0217 & 0.0392 & 0.0755 & 0.2367 & 0.3395 & 0.3937 \\
       & 500  & -0.0173 & -0.0103 & -0.0527 & 0.0146 & 0.0305 & 0.0508 & 0.1922 & 0.3375 & 0.3337 \\
       & 1000 & -0.0137 & -0.0084 & -0.0296 & 0.0087 & 0.0236 & 0.0354 & 0.1466 & 0.2974 & 0.2850 \\\hline
MSALDE & 50   & -0.0876 & -0.0546 & -0.3317 & 0.0638 & 0.0740 & 0.3293 & 0.4247 & 0.5390 & 0.8593 \\
       & 100  & -0.0740 & -0.0535 & -0.1621 & 0.0422 & 0.0596 & 0.1872 & 0.3373 & 0.5314 & 0.6540 \\
       & 200  & -0.0589 & -0.0522 & -0.0812 & 0.0318 & 0.0611 & 0.1131 & 0.2943 & 0.5086 & 0.5318 \\
       & 500  & -0.0458 & -0.0375 & -0.0408 & 0.0217 & 0.0524 & 0.0783 & 0.2427 & 0.4922 & 0.4525 \\
       & 1000 & -0.0306 & -0.0144 & -0.0108 & 0.0126 & 0.0368 & 0.0453 & 0.1796 & 0.3964 & 0.3506 \\

          \hline
\end{tabular}
}
\end{table}

\begin{table}[H]
\centering
\caption{The simulation results in $\lambda=0.6$, $p_1=0.3$ and $p_2=0.2$}
\scalebox{0.9} {\ \label{sim:tab3}
\begin{tabular}{ccccccccccc}\hline
          &      &         & Bias    &         &        & MSE    &        &        & MRE    &        \\\hline
Estimator & $n$    & $\hat\lambda$   & $\hat p_1$   & $\hat p_2$  & $\hat\lambda$  & $\hat p_1$   & $\hat p_2$ & $\hat\lambda$   & $\hat p_1$   & $\hat p_2$  \\\hline

MLE       
          & 50  & 0.1098  & 0.0340  & 0.1071  & 0.0647 & 0.0243 & 0.0620 & 0.3208 & 0.4098 & 0.9359 \\
          & 100 & 0.0404  & 0.0038  & 0.0484  & 0.0336 & 0.0189 & 0.0336 & 0.2377 & 0.3827 & 0.7089 \\
          & 200 & 0.0110  & -0.0011 & 0.0050  & 0.0230 & 0.0148 & 0.0196 & 0.1972 & 0.3287 & 0.5569 \\
          & 500 & -0.0041 & -0.0047 & -0.0141 & 0.0127 & 0.0105 & 0.0118 & 0.1456 & 0.2716 & 0.4443 \\
         & 1000 & -0.002 & -0.003 & -0.0031 & 0.0055 & 0.0048 & 0.0059 & 0.0969 & 0.1821 & 0.3132 \\
          \hline
LSE       
          & 50  & 0.0450  & 0.0323  & -0.0771 & 0.1165 & 0.0581 & 0.2139 & 0.4481 & 0.5733 & 1.9004 \\
          & 100 & 0.0176  & 0.0093  & -0.0602 & 0.0754 & 0.0418 & 0.1257 & 0.3591 & 0.5382 & 1.4345 \\
          & 200 & 0.0181  & 0.0217  & -0.0820 & 0.0563 & 0.0329 & 0.0616 & 0.3121 & 0.4902 & 0.9838 \\
          & 500 & -0.0053 & 0.0045  & -0.0669 & 0.0329 & 0.0202 & 0.0265 & 0.2477 & 0.4000 & 0.6508 \\
         & 1000 & -0.0056 & -0.0002 & -0.0317 & 0.0170 & 0.0115 & 0.0085 & 0.1782 & 0.2995 & 0.3765 \\
          \hline
WLSE      
          & 50  & -0.0417 & -0.0475 & -0.0075 & 0.0778 & 0.0334 & 0.1620 & 0.3782 & 0.4911 & 1.6246 \\
          & 100 & -0.0368 & -0.0364 & -0.0245 & 0.0547 & 0.0315 & 0.0814 & 0.3099 & 0.4875 & 1.1433 \\
          & 200 & -0.0115 & -0.0088 & -0.0330 & 0.0368 & 0.0234 & 0.0350 & 0.2504 & 0.4218 & 0.7422 \\
          & 500 & -0.0108 & -0.0066 & -0.0264 & 0.0177 & 0.0138 & 0.0158 & 0.1789 & 0.3226 & 0.5017 \\
          & 1000 & -0.0055 & -0.0037 & -0.0111 & 0.0086 & 0.0072 & 0.0069 & 0.1247 & 0.2316 & 0.3354 \\
          \hline
ADE       
          & 50  & 0.0199  & -0.0215 & 0.0617  & 0.0668 & 0.0336 & 0.1216 & 0.3382 & 0.5014 & 1.3324 \\
          & 100 & -0.0255 & -0.0381 & 0.0105  & 0.0446 & 0.0277 & 0.0690 & 0.2805 & 0.4684 & 1.0297 \\
          & 200 & -0.0107 & -0.0111 & -0.0220 & 0.0315 & 0.0208 & 0.0322 & 0.2327 & 0.3994 & 0.7100 \\
          & 500 & -0.0127 & -0.0097 & -0.0219 & 0.0168 & 0.0133 & 0.0154 & 0.1743 & 0.3163 & 0.4963 \\
          & 1000 & -0.0073 & -0.0058 & -0.0092 & 0.0084 & 0.0071 & 0.0068 & 0.1232 & 0.2291 & 0.3322 \\
          \hline
CvME      
          & 50  & 0.1134  & 0.0235  & 0.0759  & 0.1145 & 0.0584 & 0.1729 & 0.4288 & 0.5783 & 1.6924 \\
          & 100 & 0.0484  & 0.0056  & 0.0037  & 0.0752 & 0.0419 & 0.1150 & 0.3527 & 0.5370 & 1.3690 \\
          & 200 & 0.0341  & 0.0196  & -0.0457 & 0.0545 & 0.0327 & 0.0539 & 0.3031 & 0.4873 & 0.9240 \\
          & 500 & 0.0001  & 0.0040  & -0.0558 & 0.0324 & 0.0200 & 0.0255 & 0.2448 & 0.3977 & 0.6344 \\
          & 1000 & -0.0031 & -0.0007 & -0.0263 & 0.0167 & 0.0114 & 0.0083 & 0.1768 & 0.2984 & 0.3709 \\
          \hline
MPSE      
          & 50  & -0.1164 & -0.0455 & -0.1610 & 0.0717 & 0.0262 & 0.1832 & 0.3758 & 0.4530 & 1.4599 \\
          & 100 & -0.1280 & -0.0761 & -0.1132 & 0.0568 & 0.0242 & 0.0721 & 0.3315 & 0.4483 & 1.0265 \\
          & 200 & -0.0950 & -0.0621 & -0.0648 & 0.0390 & 0.0210 & 0.0301 & 0.2645 & 0.3989 & 0.6808 \\
          & 500 & -0.0566 & -0.0381 & -0.0366 & 0.0204 & 0.0141 & 0.0130 & 0.1822 & 0.3125 & 0.4600 \\
          & 1000 & -0.0307 & -0.0239 & -0.0044 & 0.0080 & 0.0066 & 0.0058 & 0.1141 & 0.2072 & 0.3068 \\
          \hline
TADE      
          & 50  & -0.0268 & -0.0413 & 0.0233  & 0.0561 & 0.0322 & 0.1066 & 0.3207 & 0.5061 & 1.2465 \\
          & 100 & -0.0371 & -0.0410 & -0.0052 & 0.0437 & 0.0255 & 0.0580 & 0.2795 & 0.4538 & 0.9305 \\
          & 200 & -0.0365 & -0.0288 & -0.0277 & 0.0307 & 0.0208 & 0.0289 & 0.2327 & 0.3978 & 0.6663 \\
          & 500 & -0.0222 & -0.0155 & -0.0264 & 0.0168 & 0.0128 & 0.0138 & 0.1710 & 0.3060 & 0.4767 \\
          & 1000 & -0.0092 & -0.0073 & -0.0076 & 0.0071 & 0.0059 & 0.0060 & 0.1096 & 0.2016 & 0.3150 \\
          \hline
MSADE     
          & 50  & -0.0404 & -0.0025 & -0.0760 & 0.0492 & 0.0323 & 0.1075 & 0.2982 & 0.4616 & 1.3072 \\
          & 100 & -0.0581 & -0.0343 & -0.0401 & 0.0363 & 0.0238 & 0.0646 & 0.2583 & 0.4217 & 1.0053 \\
          & 200 & -0.0472 & -0.0317 & -0.0329 & 0.0294 & 0.0209 & 0.0392 & 0.2309 & 0.3936 & 0.7724 \\
          & 500 & -0.0280 & -0.0176 & -0.0304 & 0.0178 & 0.0151 & 0.0210 & 0.1768 & 0.3252 & 0.5727 \\
         & 1000 & -0.0228 & -0.0180 & -0.0058 & 0.0092 & 0.0077 & 0.0085 & 0.1242 & 0.2281 & 0.3698 \\

          \hline
MSALDE    
          & 50  & -0.0728 & -0.0224 & -0.1349 & 0.0697 & 0.0365 & 0.1467 & 0.3672 & 0.5174 & 1.5429 \\
          & 100 & -0.0970 & -0.0635 & -0.0812 & 0.0527 & 0.0297 & 0.0821 & 0.3152 & 0.4940 & 1.1565 \\
          & 200 & -0.0830 & -0.0532 & -0.0657 & 0.0405 & 0.0267 & 0.0460 & 0.2741 & 0.4610 & 0.8558 \\
          & 500 & -0.0452 & -0.0293 & -0.0405 & 0.0217 & 0.0164 & 0.0197 & 0.1897 & 0.3427 & 0.5615 \\
          & 1000 & -0.0351 & -0.0273 & -0.0059 & 0.0106 & 0.0086 & 0.0076 & 0.1310 & 0.2368 & 0.3474 \\
          \hline
\end{tabular}
}
\end{table}
\begin{table}[H]
\centering
\caption{The simulation results in $\lambda=2$, $p_1=0.2$ and $p_2=0.4$}
\scalebox{0.9} {\ \label{sim:tab4}
\begin{tabular}{ccccccccccc}\hline
          &      &         & Bias    &         &        & MSE    &        &        & MRE    &        \\\hline
Estimator & $n$    & $\hat\lambda$   & $\hat p_1$   & $\hat p_2$  & $\hat\lambda$  & $\hat p_1$   & $\hat p_2$ & $\hat\lambda$   & $\hat p_1$   & $\hat p_2$  \\\hline
MLE       
          & 50  & 0.3739  & 0.0849  & -0.0092 & 0.5916 & 0.0279 & 0.0682 & 0.3066 & 0.6328 & 0.5579 \\
          & 100 & 0.2538  & 0.0714  & -0.0598 & 0.4427 & 0.0252 & 0.0500 & 0.2684 & 0.6194 & 0.4746 \\
          & 200 & 0.1603  & 0.0513  & -0.0706 & 0.3391 & 0.0208 & 0.0411 & 0.2320 & 0.5703 & 0.4262 \\
          & 500 & 0.0809  & 0.0334  & -0.0654 & 0.2728 & 0.0161 & 0.0232 & 0.2089 & 0.5053 & 0.3044 \\
          & 1000 & 0.0679  & 0.0252  & -0.0429 & 0.1440 & 0.0087 & 0.0100 & 0.1505 & 0.3669 & 0.1982 \\
\hline
LSE       
          & 50  & 0.2185  & 0.0803  & -0.1574 & 1.2412 & 0.0623 & 0.2830 & 0.4263 & 0.8045 & 1.0584 \\
          & 100 & 0.2043  & 0.0645  & -0.1153 & 0.8197 & 0.0427 & 0.1469 & 0.3518 & 0.6961 & 0.7636 \\
          & 200 & 0.1600  & 0.0601  & -0.1365 & 0.6762 & 0.0374 & 0.1150 & 0.3227 & 0.6945 & 0.6657 \\
          & 500 & 0.1341  & 0.0528  & -0.1197 & 0.5249 & 0.0280 & 0.0757 & 0.2809 & 0.6138 & 0.5034 \\
           & 1000 & 0.1445  & 0.0403  & -0.0562 & 0.2723 & 0.0165 & 0.0196 & 0.2026 & 0.4824 & 0.2789 \\
\hline
WLSE      
          & 50  & -0.0138 & 0.0300  & -0.1280 & 0.8810 & 0.0328 & 0.2221 & 0.3707 & 0.5816 & 0.9652 \\
          & 100 & 0.0819  & 0.0433  & -0.1097 & 0.6464 & 0.0287 & 0.1197 & 0.3169 & 0.5999 & 0.7078 \\
          & 200 & 0.1006  & 0.0437  & -0.1027 & 0.4807 & 0.0264 & 0.0748 & 0.2750 & 0.6152 & 0.5624 \\
          & 500 & 0.0703  & 0.0317  & -0.0769 & 0.3521 & 0.0208 & 0.0369 & 0.2307 & 0.5480 & 0.3764 \\
           & 1000 & 0.0635  & 0.0237  & -0.0463 & 0.1817 & 0.0113 & 0.0144 & 0.1679 & 0.4178 & 0.2374 \\
\hline
ADE      
          & 50  & 0.1077  & 0.0339  & -0.0496 & 0.6554 & 0.0285 & 0.1483 & 0.3175 & 0.5847 & 0.7773 \\
          & 100 & 0.0713  & 0.0341  & -0.0851 & 0.5225 & 0.0246 & 0.1048 & 0.2895 & 0.5810 & 0.6553 \\
          & 200 & 0.0968  & 0.0399  & -0.0906 & 0.4217 & 0.0239 & 0.0688 & 0.2586 & 0.5962 & 0.5375 \\
          & 500 & 0.0513  & 0.0277  & -0.0775 & 0.3405 & 0.0198 & 0.0379 & 0.2275 & 0.5422 & 0.3764 \\
           & 1000 & 0.0609  & 0.0225  & -0.0443 & 0.1739 & 0.0108 & 0.0139 & 0.1641 & 0.4090 & 0.2321 \\
\hline
CvME     
          & 50  & 0.3957  & 0.0633  & -0.0264 & 1.3051 & 0.0589 & 0.2246 & 0.4308 & 0.8068 & 0.9437 \\
          & 100 & 0.3181  & 0.0632  & -0.0553 & 0.8453 & 0.0419 & 0.1288 & 0.3550 & 0.7006 & 0.7086 \\
          & 200 & 0.2017  & 0.0544  & -0.0999 & 0.6622 & 0.0366 & 0.0971 & 0.3196 & 0.6887 & 0.6241 \\
          & 500 & 0.1525  & 0.0526  & -0.1113 & 0.5289 & 0.0276 & 0.0755 & 0.2821 & 0.6089 & 0.4974 \\
           & 1000 & 0.1658  & 0.0424  & -0.0500 & 0.2705 & 0.0164 & 0.0185 & 0.2018 & 0.4788 & 0.2714 \\
\hline
MPSE     
          & 50  & -0.5210 & 0.0142  & -0.4051 & 0.8229 & 0.0165 & 0.4023 & 0.3785 & 0.4820 & 1.2314 \\
          & 100 & -0.3548 & 0.0002  & -0.2645 & 0.6410 & 0.0176 & 0.1715 & 0.3365 & 0.5406 & 0.8481 \\
          & 200 & -0.2902 & -0.0137 & -0.1899 & 0.5168 & 0.0177 & 0.0965 & 0.2952 & 0.5660 & 0.6305 \\
          & 500 & -0.2358 & -0.0209 & -0.1266 & 0.4067 & 0.0165 & 0.0463 & 0.2605 & 0.5411 & 0.4237 \\
           & 1000 & -0.1580 & -0.0177 & -0.0719 & 0.2383 & 0.0108 & 0.0173 & 0.1945 & 0.4311 & 0.2520 \\
\hline
TADE      
          & 50  & -0.0654 & 0.0141  & -0.0991 & 0.5646 & 0.0218 & 0.1713 & 0.3021 & 0.5547 & 0.8243 \\
          & 100 & -0.0328 & 0.0229  & -0.1151 & 0.4629 & 0.0228 & 0.1014 & 0.2770 & 0.5810 & 0.6521 \\
          & 200 & -0.0241 & 0.0177  & -0.1016 & 0.3794 & 0.0209 & 0.0673 & 0.2444 & 0.5799 & 0.5269 \\
          & 500 & -0.0286 & 0.0132  & -0.0847 & 0.3148 & 0.0174 & 0.0365 & 0.2252 & 0.5375 & 0.3689 \\
           & 1000 & $3 \times 10^{-5}$  & 0.0132  & -0.0581 & 0.1892 & 0.0106 & 0.0144 & 0.1730 & 0.4183 & 0.2358 \\
\hline
MSADE    
          & 50  & -0.1474 & 0.0307  & -0.1295 & 0.1748 & 0.0186 & 0.1480 & 0.1364 & 0.4697 & 0.7401 \\
          & 100 & -0.0891 & 0.0192  & -0.0855 & 0.1405 & 0.0167 & 0.0938 & 0.1187 & 0.4569 & 0.6069 \\
          & 200 & -0.0839 & 0.0071  & -0.0732 & 0.1464 & 0.0135 & 0.0583 & 0.1220 & 0.4203 & 0.4879 \\
          & 500 & -0.0588 & 0.0028  & -0.0533 & 0.1317 & 0.0099 & 0.0287 & 0.1234 & 0.3744 & 0.3417 \\
           & 1000 & -0.0378 & 0.0025  & -0.0368 & 0.0919 & 0.0067 & 0.0135 & 0.1036 & 0.3079 & 0.2315 \\
\hline
MSALDE    
          & 50  & -0.3717 & 0.0242  & -0.3079 & 0.6407 & 0.0240 & 0.2884 & 0.3251 & 0.5221 & 1.0745 \\
          & 100 & -0.2395 & 0.0164  & -0.2321 & 0.5815 & 0.0264 & 0.1710 & 0.3151 & 0.6082 & 0.8341 \\
          & 200 & -0.2253 & -0.0055 & -0.1653 & 0.4640 & 0.0202 & 0.0973 & 0.2749 & 0.5792 & 0.6325 \\
          & 500 & -0.1577 & -0.0068 & -0.1130 & 0.3512 & 0.0172 & 0.0455 & 0.2402 & 0.5417 & 0.4201 \\
 & 1000 & -0.1008 & -0.0073 & -0.0602 & 0.1893 & 0.0102 & 0.0163 & 0.1736 & 0.4106 & 0.2515 \\
          \hline
\end{tabular}
}
\end{table}

\begin{itemize}

\item
The findings of Table \ref{sim:tab1} ($\lambda=1.5$, $p_1=0.5$, $p_2=0.3$) show that all estimators improve systematically as the sample size increases, with decreases observed in bias, MSE, and MRE. The MLE provides satisfactory performance overall and becomes highly efficient for large samples; however, several robust estimators provide competitive or superior accuracy, particularly in smaller sample sizes. Among these, the MSADE estimator shines by maintaining consistently low bias and stable MSE values across all parameters. The ADE, WLSE, and TADE estimators also show balanced and reliable performance, outperforming LSE in small-sample settings. On the other hand, the LSE and MSALDE produce higher bias and larger MRE, reflecting weaker finite-sample behavior, while MPSE performs moderately but remains inferior to MSADE and ADE. Overall, although the MLE becomes highly efficient as $n$ increases, the robust MSADE presents more stable and accurate estimates for small and medium sample sizes.

\item
Table~\ref{sim:tab2} provides the simulation results under the setting $\lambda = 0.5$, $p_1 = 0.4$, and $p_2 = 0.5$, demonstrate that all estimators improve steadily with increasing sample size, as reflected by consistent decreases in bias, MSE, and MRE. Among the competing techniques, the MSADE presents the most favorable performance, achieving relatively small bias and the lowest MSE and MRE values across all parameters, particularly for medium and large samples. The ADE, WLSE, and TADE also exhibit stable and reliable estimates, often outperforming the MLE in small sample sizes. Although the MLE becomes increasingly competitive as the sample size increases, it provides higher bias and MSE for small samples. Conversely, the LSE, MPSE, and MSALDE show weaker finite-sample behavior, characterized by larger bias and substantially higher MSE and MRE values, especially for the estimation of $p_2$. 

\item
In Table~\ref{sim:tab3},the simulation results based on $\lambda = 0.6$, $p_1 = 0.3$, and $p_2 = 0.2$ are given. The results indicate that all estimators exhibit improved performance as the sample size increases, with notable decreases in bias, MSE, and MRE. Among the competing methods, the MSADE exhibits consistently favorable results, yielding comparatively low bias and stable MSE values across all parameters, particularly for medium and large sample sizes. The ADE, WLSE, and TADE also provide reliable behavior, often surpassing the MLE in small sample sizes. Although the MLE becomes increasingly accurate as $n$ increases, it demonstrates relatively higher bias and MSE levels for small samples, especially in estimating $p_2$. On the other hand, the LSE, MPSE, and MSALDE tend to suffer from larger bias and elevated MSE and MRE values, indicating weaker finite-sample performance. Overall, the findings emphasize the robustness and efficiency of the MSADE estimator under this parameter setting.

\item
Table~\ref{sim:tab4} demonstrates the simulation results under the setting $\lambda  = 2, p_1=0.2, p_2 =0.4$, show that all estimators improve as the sample size increases, as evidenced by decreasing bias, MSE, and MRE values. Among the examined methods, the MSADE provides the best overall performance, achieving consistently lower bias and MSE across all parameters and sample sizes, which indicates strong robustness and efficiency. The ADE and WLSE also perform reliably, frequently outperforming the classical MLE, particularly in smaller samples. Although the MLE becomes competitive as the sample size increases, it presents relatively higher bias and MSE in small sample sizes. On the other hand, the MPSE and MSALDE demonstrate poor finite-sample behavior, characterized by large biases and comparatively high MSE and MRE values. In general, the findings highlight MSADE as the most effective estimator under the simulation scenario.

\item
The simulation results across all considered settings consistently indicate that the performance of all estimators improves as the sample size increases, with decreases in bias, MSE, and MRE. Among the estimators, the MSADE method emerges as the most robust and efficient, providing consistently low bias and stable MSE values across all parameters, particularly in medium and large sample sizes. The ADE, WLSE, and TADE also provide reliable and balanced performance, often surpassing the MLE in small sample sizes. The MLE itself becomes highly efficient for large samples, but it tends to show higher bias and MRE in smaller samples, especially for parameters like $p_2$. In contrast, the LSE, MPSE, and MSALDE generally show weaker finite-sample behavior, with larger bias and higher MSE and MRE values. In general, for practitioners seeking robust estimation in small to medium samples, we recommend the MSADE for all parameters, while ADE, WLSE, and TADE provide viable alternatives. For very large samples, MLE achieves competitive efficiency and can be reliably used.

\end{itemize}

\section{Real data analysis}
\label{sec:real}
To demonstrate the utility of the CLRBTE distribution, we present three real-world data examples in this section. We also compare the fits to those of some of its competitors, including the exponential, transmuted exponential (TE) \cite{aryal2011transmuted}, and transmuted generalized Rayleigh (TGR) \cite{merovci2014transmuted} distributions.  As a result, we take into account the following seven distinct selection criteria:  Cram\'{e}r-von-Mises statistics (CvM), Anderson-Darling statistics (AD), Kolmogorov-Smirnov statistics (KS), Akaike's information criterion (AIC), and associated p-values.  The PDFs of fitted models to data sets are shown in Table \ref{tab:disttable}.
\begin{table}[H]
\caption{The list of fitted PDFs to real-world datasets}
\label{tab:disttable}
\scalebox{0.9}{
\begin{tabular}{lll}
\hline
\textbf{Distribution} & \textbf{PDF} & \textbf{Range of Parameters} \\
\hline
E & $f_{E}(x;\lambda)
=\displaystyle \lambda e^{-\lambda x}$ 
& $\lambda>0$ \\
& & \\[0.4em]
TE & $f_{TE}(x;\lambda,\theta)
=\displaystyle (1 + \theta)\lambda e^{-\lambda x} 
- 2\theta (1 - e^{-\lambda x}) \lambda e^{-\lambda x}$ 
& $\lambda>0,\ -1<\theta<1$ \\
& & \\[0.4em]
TGR & $f_{TGR}(x;\lambda, \beta,\theta)
=\displaystyle 2 \lambda \beta^2 x e^{-(\beta x)^2}
\left( 1 - e^{-(\beta x)^2} \right)^{\lambda - 1}
\left[ 1 + \theta - 2\theta 
\left( 1 - e^{-(\beta x)^2} \right)^{\lambda} \right]$ 
& $\lambda,\beta>0,\ -1<\theta<1$ \\
\hline
\end{tabular}
}
\end{table}

\subsection{Survival time data}
In this subsection, we analyze a well-known right-skewed sample of survival times (in days) for patients diagnosed with acute myelogenous leukemia, originally reported in \cite{feigl1965estimation}. The survival time dataset is as follows:  
65, 156, 100, 134, 16, 108, 121, 4, 39, 143, 56, 26, 22, 1, 1, 5, 65, 56, 65, 17, 7, 16, 22, 3, 4,  
2, 3, 8, 4, 3, 30, 4, 43.

Table \ref{tab:real1est} shows the MLEs and corresponding standard errors (SEs) of the parameters for the fitted models while Table \ref{tab:real1comp} provides the selection criteria for the survival time data. The fitted CDF and PDF plots are shown in Figures \ref{Fig: real1cdf} and \ref{Fig: real1pdf} respectively.

\begin{figure} [H]
    \centering
    \includegraphics[width=0.8\linewidth]{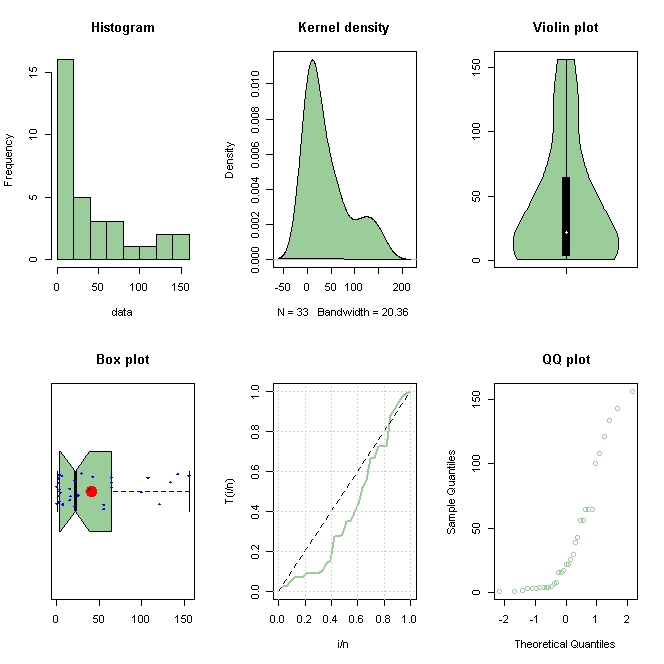}
    \caption{Non-parametric plots for survival time data}
    \label{Fig: nonplot1}
\end{figure}

Figure \ref{Fig: nonplot1} illustrates a variety of non-parametric graphical methods utilized for survival time data.  The data distribution exhibits significant skewness, with most observations concentrated at lower levels.  The kernel density and violin plots demonstrate multimodality, signifying various substructures within the data.  The Q-Q plot significantly deviates from the reference line, indicating a deficiency in normality, however the box plot corroborates the presence of outliers.

\begin{table}[H]
\centering
\caption{The comparison statistics for the survival time data}
\label{tab:real1comp}
\scalebox{0.8}{
\begin{tabular}{cccccccc}
\hline
Distribution & AIC & KS & AD & CvM & p-value(KS) & p-value(AD) & p-value(CvM) \\
\hline
CLRBTE & 312.4142 & 0.1044 & 0.4992 & 0.0643 & 0.8646 & 0.7464 & 0.7901 \\
TE     & 313.7042 & 0.2028 & 1.7879 & 0.2542 & 0.1325 & 0.1208 & 0.1832 \\
E      & 312.9003 & 0.2182 & 2.3066 & 0.3250 & 0.0863 & 0.0631 & 0.1148 \\
TGR    & 312.8599 & 0.1318 & 0.6735 & 0.1003 & 0.6156 & 0.5801 & 0.5862 \\
\hline
\end{tabular}
}
\end{table}


\begin{table}[H]
\centering
\caption{The MLEs and SEs of the parameters for the survival time data}
\label{tab:real1est}
\scalebox{0.75}{
\begin{tabular}{cccccccccccc}
\hline
Distribution & $\hat{\lambda}$ & $\hat{p_1}$ & $\hat{p_2}$ & $\hat{\theta}$ & $\hat{\beta}$ & SE($\hat{\lambda}$) & SE($\hat{p_1}$) & SE($\hat{p_2}$) & SE($\hat{\theta}$) & SE($\hat{\beta}$) \\
\hline
CLRBTE & 0.0174 & 0.6369 & 0.0878 & - & - & 0.0055 & 0.3458 & 0.6688 & - & - \\
TE     & 0.0206 & - & - & 0.3704 & - & 0.0057 & - & - & 0.3617 & - \\
E      & 0.0245 & - & - & - & - & 0.0043 & - & - & - & - \\
TGR    & 0.3066 & - & - & 0.4216 & 0.0089 & 0.0608 & - & - & 0.4719 & 0.0025 \\
\hline
\end{tabular}
}
\end{table}

\begin{figure}[H]
    \centering
        \includegraphics[width=0.6\linewidth]{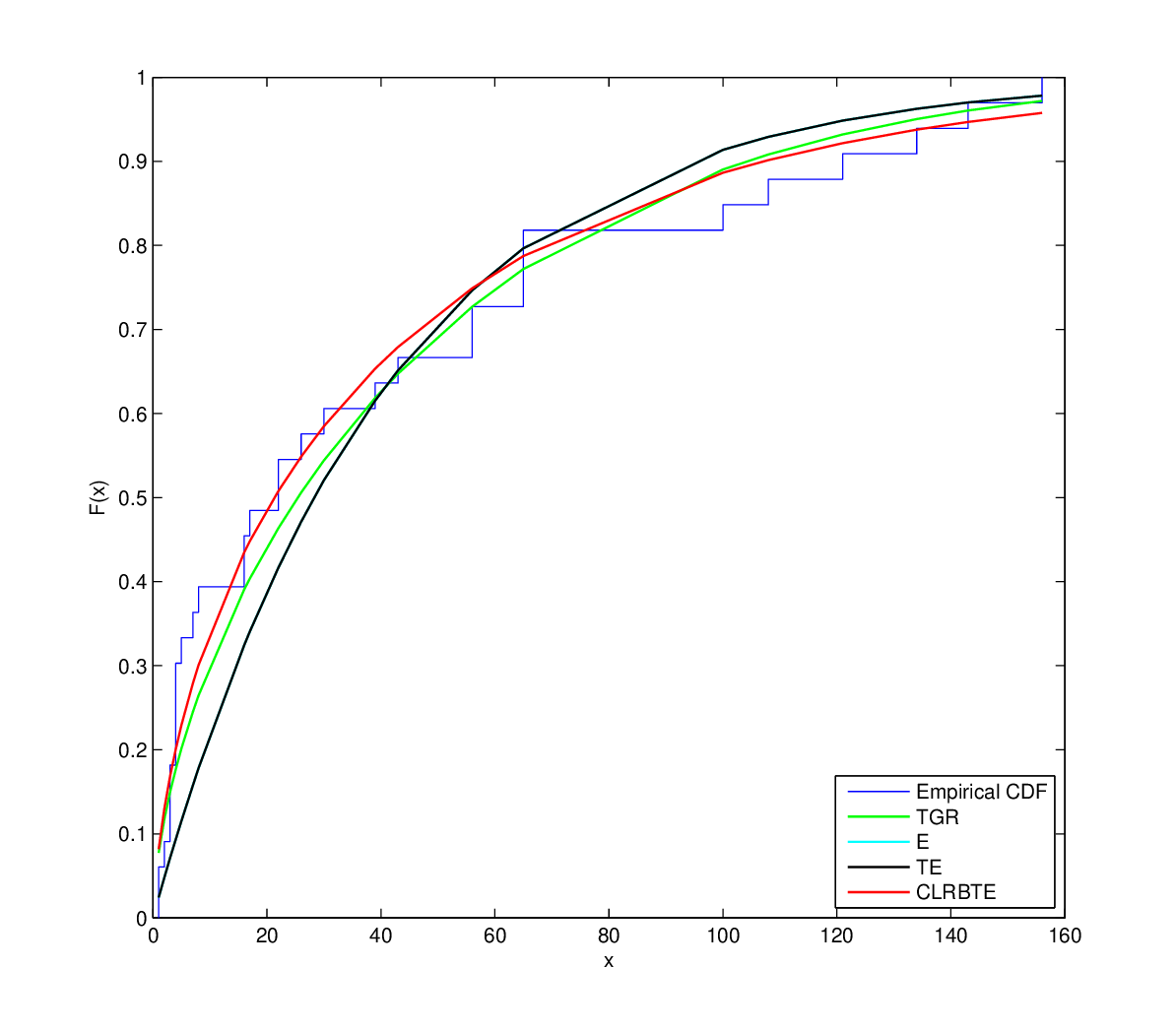}
    \caption{The fitted CDFs for the survival time data}
    \label{Fig: real1cdf}
\end{figure}


\begin{figure}[H]
    \centering
        \includegraphics[width=0.6\linewidth]{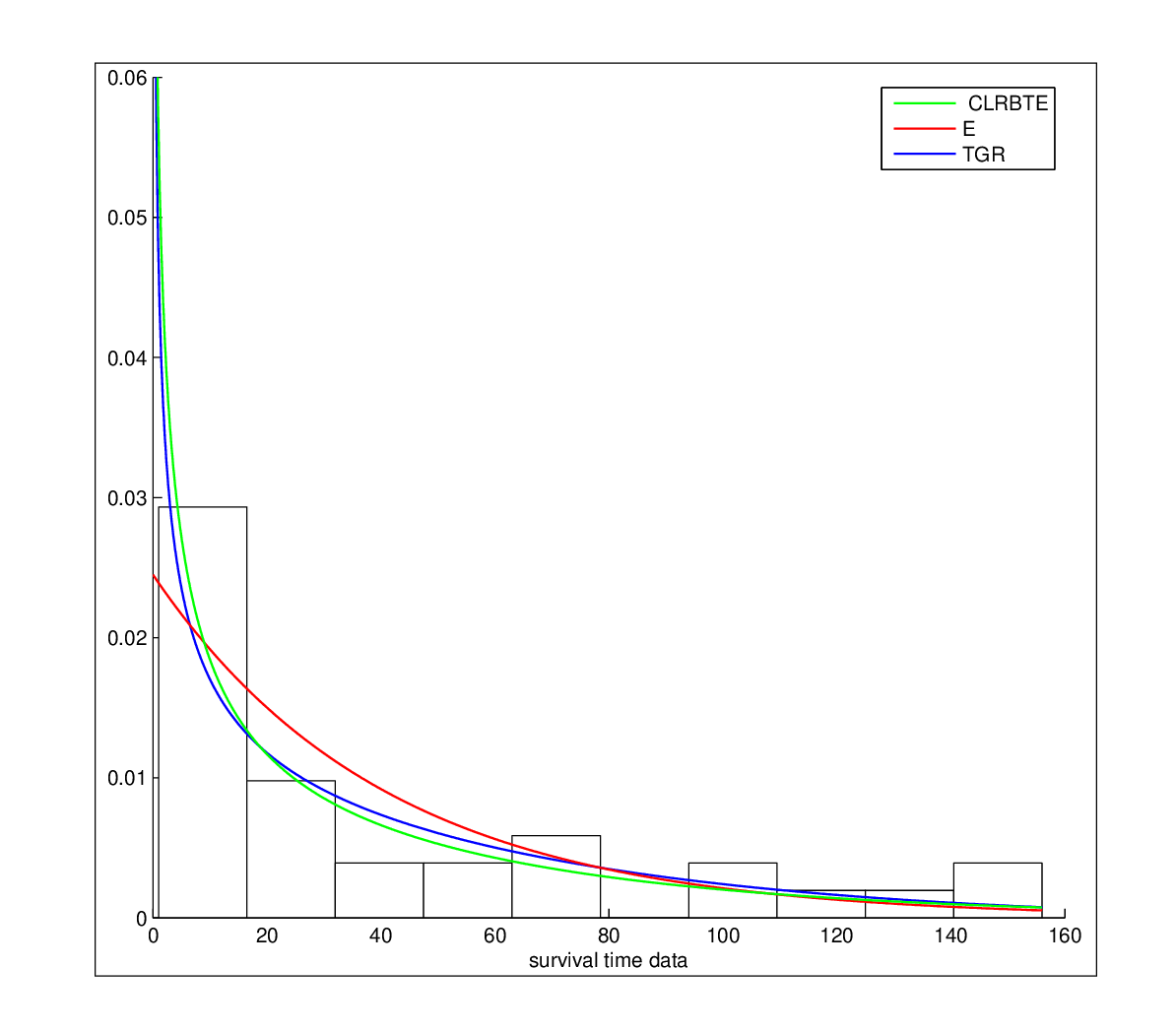}
    \caption{The fitted PDFs for the survival time data}
    \label{Fig: real1pdf}
\end{figure}

Based on all of the selection criteria listed in Table \ref{tab:real1comp}, we identify the CLRBTE distribution as the model that fits the survival time dataset the best out of all the models.  It is evident from Figures \ref{Fig: real1cdf} and \ref{Fig: real1pdf} that the closest model to the survival time dataset is the CLRBTE distribution.

\subsection{Failure time data}
In this subsection, we analyze the data set provided by \cite{murthy2004weibull}, denotes the failure times of 20 components. The failure time data are:
0.0003, 0.0298, 0.1648, 0.3529, 0.4044, 0.5712, 0.5808, 0.7607, 0.8188, 1.1296, 1.2228, 1.2773, 1.9115, 2.2333, 2.3791, 3.0916, 3.4999, 3.7744, 7.4339, 13.6866.

Table \ref{tab:real2comp} provides the selection criteria while Table \ref{tab:real2est} gives the MLEs and their SEs for the failure time data. Furthermore, the fitted CDF and PDF plots are shown in Figures \ref{Fig: real2cdf} and \ref{Fig: real2pdf} respectively.

\begin{figure} [H]
    \centering
    \includegraphics[width=0.8\linewidth]{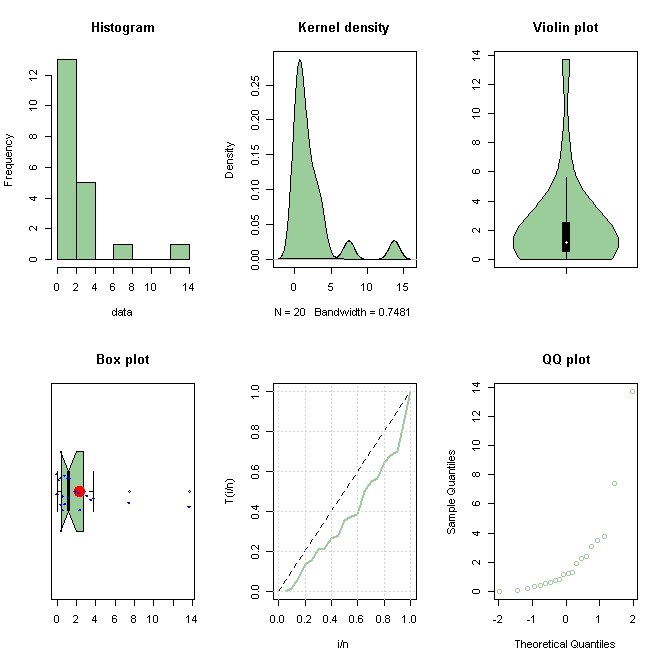}
    \caption{Non-parametric plots for failure time data}
    \label{Fig: nonplot2}
\end{figure}

Failure time data nonparametric summaries are shown in Figure~\ref{Fig: nonplot2}. From Figure \ref{Fig: nonplot2}, the failure dataset displays a heavy right skew. The density estimate indicates a dominant peak alongside secondary fluctuations, consistent with the characteristics of the potential mixture.


\begin{table}[H]
\centering
\caption{The comparison statistics for the failure time data}
\label{tab:real2comp}
\scalebox{0.8}{
\begin{tabular}{cccccccc}
\hline
Distribution & AIC & KS & AD & CvM & p-value(KS) & p-value(AD) & p-value(CvM) \\
\hline
CLRBTE & 74.2783 & 0.1126 & 0.3306 & 0.0345 & 0.9371 & 0.9126 & 0.9627 \\
TE     & 75.1337 & 0.1241 & 0.6187 & 0.0463 & 0.8810 & 0.6283 & 0.9032 \\
E      & 74.7239 & 0.1691 & 0.9941 & 0.1104 & 0.5597 & 0.3593 & 0.5401 \\
TGR    & 73.6468 & 0.1274 & 0.3710 & 0.0545 & 0.8619 & 0.8756 & 0.8540 \\
\hline
\end{tabular}
}
\end{table}


\begin{table}[H]
\centering
\caption{The MLEs and SEs of the parameters for the failure time data}
\label{tab:real2est}
\scalebox{0.75}{
\begin{tabular}{cccccccccccc}
\hline
Distribution & $\hat{\lambda}$ & $\hat{p_1}$ & $\hat{p_2}$ & $\hat{\theta}$ & $\hat{\beta}$ & SE($\hat{\lambda}$) & SE($\hat{p_1}$) & SE($\hat{p_2}$) & SE($\hat{\theta}$) & SE($\hat{\beta}$) \\
\hline
CLRBTE & 0.2098 & 0.2510 & 0.5475 & - & - & 0.2231 & 0.6935 & 0.6300 & - & - \\
TE     & 0.3213 & - & - & 0.6236 & - & 0.1115 & - & - & 0.4071 & - \\
E      & 0.4413 & - & - & - & - & 0.0987 & - & - & - & - \\
TGR    & 0.2717 & - & - & 0.7501 & 0.1099 & 0.0578 & - & - & 0.4221 & 0.0483 \\
\hline
\end{tabular}
}
\end{table}

\begin{figure}[H]
    \centering
        \includegraphics[width=0.6\linewidth]{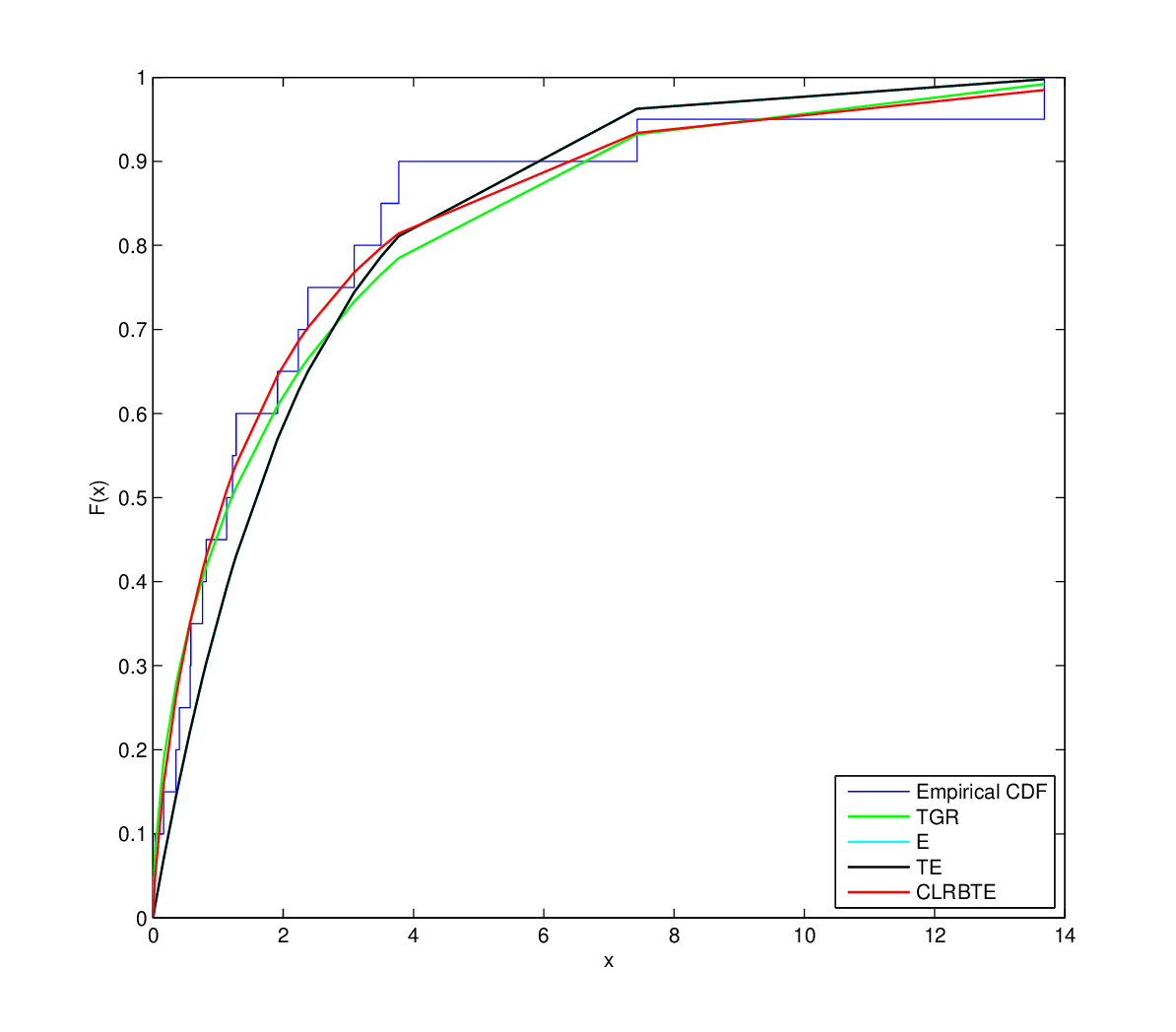}
    \caption{The fitted CDFs for the failure time data}
    \label{Fig: real2cdf}
\end{figure}


\begin{figure}[H]
    \centering
        \includegraphics[width=0.6\linewidth]{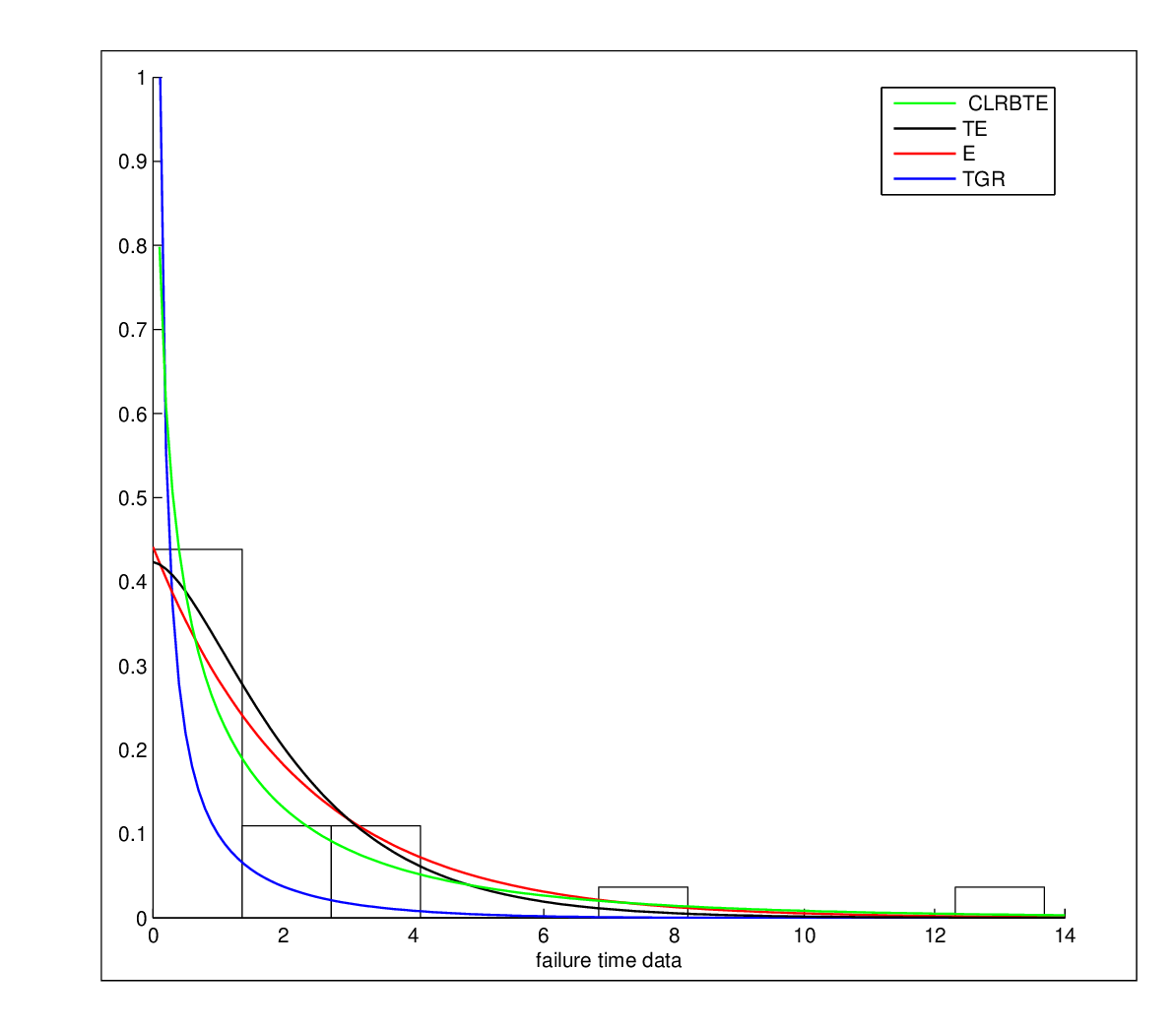}
    \caption{The fitted PDFs for the failure time data}
    \label{Fig: real2pdf}
\end{figure}

From Figures \ref{Fig: real2cdf} and \ref{Fig: real2pdf}, we observe that the best-fitted model is CLRBTE distribution for the failure dataset.

\section{Conclusion}
\label{conc}
Through present study, we introduce a new family of distributions based the distributions of the first three lower record values to the literature. We also examine in detail a special case of the proposed  family based on the exponential distribution. We present some statistical properties of the CLRBTE distribution, including point estimation, a simulation study, and its application to real-life data. Nine different estimators have been proposed for point estimation. A MC simulation study has been designed to compare the performance of these estimators based on bias, MSE, and MRE across four different parameter settings, varying sample sizes, and a 5000-iteration. As a result of the simulation study, it has been concluded that MSADE is a good alternative to MLE in estimating the parameters of the CLRBTE distribution. Subsequently, the CLRBTE distribution fit both datasets better than potential competitors E, TE, and TGR, according to seven different comparison criteria in the two real data analyses. In future studies, new members can be proposed for the suggested family of distributions.

\bigskip
\noindent\textbf{Funding} This paper is not financially supported.
\newline
\bigskip
\noindent\textbf{Data Availability} The reference list contains the publicly available datasets.

\bigskip
\noindent\textbf{Conflict of interest} Not applicable.

\bibliographystyle{apalike} 
\bibliography{ref}
\end{document}